\newcommand{\mC}{\ensuremath{\mathcal{C}}}
\definecolor{webblue}{rgb}{0,0,1}
\definecolor{webgreen}{rgb}{0,.5,0}
\definecolor{webbrown}{rgb}{.6,0,0}
\title{Generalized Isotonic Regression}
\author{Ronny Luss\thanks{Department of Statistics and Operations Research, Tel Aviv University,
Tel Aviv, Israel. \texttt{ronnyluss@gmail.com}} \and Saharon
Rosset\thanks{Department of Statistics and Operations Research, Tel Aviv University,
Tel Aviv, Israel. \texttt{saharon@post.tau.ac.il}}}
\newcommand{\BEAS}{\begin{eqnarray*}}
\newcommand{\EEAS}{\end{eqnarray*}}
\newcommand{\BEA}{\begin{eqnarray}}
\newcommand{\EEA}{\end{eqnarray}}
\newcommand{\BEQ}{\begin{equation}}
\newcommand{\EEQ}{\end{equation}}
\newcommand{\BIT}{\begin{itemize}}
\newcommand{\EIT}{\end{itemize}}
\newcommand{\BNUM}{\begin{enumerate}}
\newcommand{\ENUM}{\end{enumerate}}
\newcommand{\BA}{\begin{array}}
\newcommand{\EA}{\end{array}}
\newcommand{\BC}{\begin{center}}
\newcommand{\EC}{\end{center}}
\newcommand{\reals}{{\mbox{\bf R}}}
\newcommand{\QED}{~~\rule[-1pt]{6pt}{6pt}}
\newcommand{\argmin}{\mathop{\rm argmin}}
\newtheorem{theorem}{Theorem}
\newtheorem{proposition}[theorem]{Proposition}
\newtheorem{remark}[theorem]{Remark}
\newcounter{exno}
\newenvironment{proof}{\textbf{Proof.}}{\QED\bigskip}
\newcounter{oursection}
\newcounter{lecture}
\begin{document}
\maketitle
\begin{abstract}
We present a new computational and statistical approach for fitting
isotonic models under convex differentiable loss functions through
recursive partitioning. Models along the partitioning path are also
isotonic and can be viewed as regularized solutions to the problem.
Our approach generalizes and subsumes the well-known work of Barlow
and Brunk (1972) on fitting isotonic regressions subject to
specially structured loss functions, and expands the range of loss
functions that can be used (for example, adding Huber's loss for
robust regression).  This is
accomplished through an algorithmic adjustment to the recursive
partitioning approach recently developed for solving large scale
$\l_2$-loss isotonic regression problems (Spouge et al. 2003, Luss
et al. 2011).  We prove that the new algorithm solves the generalized problem while maintaining the favorable computational and statistical properties of the $l_2$ algorithm.  The results are demonstrated on both real and synthetic data in two
settings: fitting count data using negative Poisson log-likelihood
loss, and fitting robust isotonic regressions using Huber's loss.
\end{abstract}
\begin{keywords}isotonic regression, nonparametric regression, regularization path, convex optimization\end{keywords}

\doublespacing
\section{Introduction}
In this paper, we generalize recently developed algorithms for
solving large-scale isotonic regressions with $l_2$ loss function
\cite{Spou2003,Luss2012} in order to handle a  more general class of
loss functions.  These generalizations allow for fitting isotonic
regressions with useful loss functions such as Huber's loss, which was previously impractical for large problems using generic convex optimization solvers. For example, isotonic regression with Huber's loss can be solved with generic quadratic programming solvers that suffer due to the large number of constraints in our problems, whereas the algorithm we introduce takes advantage of the structured constraints and is more efficient by orders of magnitude. Isotonic regression is a nonparametric approach for building models whose fits are monotone
in their covariates. Such assumptions are natural to applications in
biology \cite{Oboz2008}, ranking \cite{Zhen2008}, medicine
\cite{Sche1997}, statistics \cite{Barl1972} and psychology
\cite{Kruskal64}.  Assume $n$ data observations $(x_1, y_1), ...,
(x_n,y_n)$ and a partial order $\preceq$, e.g., the standard
Euclidean one where $x_1 \preceq x_2$ if and only if $x_{1j} \leq
x_{2j}$ coordinate-wise. We index the set of isotonicity constraints
implied by the partial order by $\mathcal{I}=\{(i,j):x_i \preceq
x_j\}$.  Classic isotonic regression considers the $l_2$ loss
function and solves
\begin{equation}
\min{\{\displaystyle\sum_{i=1}^n{(\hat{y}_i-y_i)^2} : \hat{y}_i\leq
\hat{y}_j\quad\forall(i,j)\in\mathcal{I}\}}
\label{eq:IR}\end{equation} in $\hat{y}\in\reals^n$.  Throughout the
paper we denote by $m=|\mathcal{I}|$ the number of isotonic
constraints and $d$ the dimension of data, i.e., $x_i\in\reals^d$.

While the assumption of isotonicity is often natural,  isotonic
regression has not been extensively applied in ``modern''
applications for two main reasons.  As the number of observations
$n$, the data dimensionality $d$, and the number of isotonicity
constraints $m$ increase, problem (\ref{eq:IR}) suffers from
computational as well statistical (i.e., overfitting) difficulties.
These are reviewed in \citeasnoun{Luss2012}, where it is argued that
the computational difficulties can be overcome using modern
algorithms, while overfitting can be addressed by regularizing the
problem in (\ref{eq:IR}), i.e., fitting ``less complex'' isotonic
models than the optimal solution of  (\ref{eq:IR}). The Isotonic
Recursive Partitioning (IRP) algorithm proposed in
\citeasnoun{Spou2003} and \citeasnoun{Luss2012} (following previous
related work by \citeasnoun{Maxw1985} and \citeasnoun{Round1986},
among others) can easily solve problems with tens of thousands of
observations, and is based on recursive partitioning of the
covariate space and constructing isotonic models of increasing
complexity, thus generating a {\em regularization path} of isotonic
models in the sense that isotonic models along the path are
regularized by the number of partitions made.



In this paper, we focus on a more general form of isotonic regression
that minimizes a convex loss function subject to the isotonicity
constraints, i.e., we solve
\begin{equation}
\min{\{f(\hat{y}) : \hat{y}_i\leq \hat{y}_j\quad\forall(i,j)\in\mathcal{I}\}}
\label{eq:IC}\end{equation}
where $f: \reals^n\rightarrow\reals$ is a separable function such that
\begin{equation}
f(\hat{y})=\displaystyle\sum_{i=1}^n{f_i(\hat{y}_i)},
\label{eq:IC_obj}\end{equation} and $f_i: \reals\rightarrow\reals$
is differentiable and convex for all $i=1,\ldots,n$.  Typically,
$f_i(\hat{y}_i) = g(\hat{y}_i,y_i)$ measures the fit of $\hat{y_i}$
to the observed response. Table \ref{table:loss_functions} provides
several examples for the functions $f_i(\cdot)$ that define
$f(\cdot)$ above.

\begin{table}[h!]
\begin{center}
\extrarowheight 0.8ex
\begin{tabular}{|l|l|}
\hline
p-norm loss, $1<p<2$ & $(\hat{y}_i-y_i)^p$ \\ \hline
$\delta$-Huber loss &  $(\hat{y}_i-y_i)^2/2$ for $|\hat{y}_i-y_i|<\delta$ and $\delta(|\hat{y}_i-y_i|-\delta/2)$ otherwise\\ \hline
Negative Poisson log-likelihood & $\hat{y}_i-y_i\ln{(\hat{y}_i)}$\\ \hline
Negative Bernoulli log-likelihood & $-y_i\ln{(\hat{y}_i)}-(1-y_i)\ln{(1-\hat{y}_i)}$\\
\hline
\end{tabular}
\end{center}
\caption{Examples of loss functions solvable by Algorithm \ref{alg:GIRP}}
\label{table:loss_functions}
\end{table}

The notion of generalized isotonic regression is not new.  \citeasnoun{Barl1972} defined a generalized loss function as
\begin{equation} \label{eq:bb_objective}
f(\hat{y})= \displaystyle\sum_{i=1}^n{\Phi(\hat{y}_i)-\hat{y}_iy_i}
\end{equation}
for proper convex $\Phi:\reals\rightarrow\reals$ which was minimized
subject to the same isotonicity constraints as in (\ref{eq:IC}).
They showed that this generalized isotonic regression problem can be
solved equivalently as an instance of the $l_2$ isotonic regression
(\ref{eq:IR}).  This implies that any large-scale algorithm for
problem (\ref{eq:IR}) can be used to solve isotonic regressions with
objectives of the form (\ref{eq:bb_objective}).  Generalized
objectives for large-scale Poisson and Bernoulli regressions as
given in Table \ref{table:loss_functions} can be solved in this
manner, however the $p$-norm and Huber loss functions cannot.  This
relationship will be further formalized in Section
\ref{ss:barlow_brunk}.  Generalized isotonic regression (using
separable loss functions) in $d=1$ dimension was also considered in
\citeasnoun{Best2000} and \citeasnoun{Ahuja2001} using extensions of
the pooled adjacent violators algorithm (PAVA).  Neither assumes
differentiability as done here, and hence both are amenable to a
broader class of loss functions, albeit only in one dimension.
\citeasnoun{Hoch2003} offer a very efficient and related algorithm
for problem (\ref{eq:IC}) where the fits are restricted to being
integer (called the convex cost closure problem).  Their algorithm
can be extended to the continuous case in the sense of determining
an $\epsilon$-accurate solution by solving the integer problem on an
$\epsilon$-grid.  Depending on the required level of accuracy, their
approach can be computationally competitive with our algorithm as
described below for finding the solution of (\ref{eq:IC}), however
it lacks the natural statistical interpretation as a regularization
path which our approach affords. The method of \citeasnoun{Hoch2003}
is discussed in more detail in Section \ref{ss:barlow_brunk}.

The main contribution of this paper is a generalization of IRP that
can be used to solve large-scale multivariate generalized isotonic
regressions of the form (\ref{eq:IC}). Our generalization extends
the methods to any convex differentiable loss function, including
those mentioned in Table \ref{alg:GIRP}, and we term it generalized
isotonic recursive partitioning (GIRP). As with IRP, the
partitioning algorithm here addresses both of the main difficulties
with isotonic regression discussed above. Firstly, it provides a sequence of isotonic models of increasing complexity, converging to the globally optimal generalized isotonic solution. Early stopping along  this ``regularization path'' is a useful approach to overcome overfitting concerns of the globally optimal solution; less complex isotonic
models along the path often predict more accurately than the final
overfit model.  Secondly, it is computationally efficient; the
partitioning algorithm is an iterative scheme in which each
iteration partitions a group of observations by solving a structured
linear program for which very efficient algorithms exist.

It should be emphasized that, while the algorithmic modification from
IRP to GIRP is quite minor and the proofs for GIRP properties are
closely related to proofs of same results for IRP, the
generalization has important practical implications because it
significantly expands the range of applications for isotonic
modeling, as our examples below illustrate.

The paper continues as follows.  Section \ref{s:cirp} describes the
known results for $l_2$ isotonic regression and generalizes them to
the class of convex loss functions described by (\ref{eq:IC_obj}).
The generalized algorithm is described, and the relationship to
\citeasnoun{Barl1972} is formalized.  Section \ref{s:performance}
applies the results with Poisson log-likelihood and Huber's loss
functions to synthetic and real data sets.  A Matlab-based software
package implementing our results is available at
\href{http://www.tau.ac.il/~saharon/files/GIRPv1.zip}{\tt
www.tau.ac.il/\~{}saharon/files/GIRPv1.zip}.  We first define
terminology to be used throughout the remainder of the paper.

\subsection{Definitions}
Let $V=\{x_1,\ldots,x_n\}$ be the covariate vectors for $n$ training points where
$x_i\in\mathbb R^d$ and denote $y_i\in\mathbb R$ as the $i^{th}$ observed response.  We will refer to a general subset of
points $A\subseteq V$ with no holes (i.e., $x\preceq y\preceq z$ and $x,z\in A\Rightarrow y\in A$) as a \emph{group}.  Throughout the paper, we will use the shorthand $i\in A=\{i:x_i\in A\}$. Denote by $|A|$ the cardinality of group $A$.  The \emph{weight} of a group $A$ is denoted by
\begin{equation} \label{eq:weight}
w_A=\argmin_{z\in\reals}{\displaystyle\sum_{i\in A}{f_i(z)}}.
\end{equation}
For two groups $A$ and $B$, we denote $A\preceq B$ if $\exists x\in A, y\in B$ such that $x\preceq y$ and $\nexists x\in A, y\in B$ such that $y\prec x$ (i.e., there is at
least one comparable pair of points that satisfy the direction of
isotonicity).  A set of groups $\mathcal{V}$ is called isotonic if $A\preceq B\Rightarrow w_A\leq w_B, \forall A,B\in\mathcal{V}$.  A subset $\mathcal{L}$
($\mathcal{U}$) of $A$ is a \emph{lower set} (\emph{upper set}) of
$A$ if $x\in A,y\in\mathcal{L},x\prec y\Rightarrow x\in\mathcal{L}$
($x\in\mathcal{U},y\in A,x\prec y\Rightarrow y\in\mathcal{U}$). A group $B\subseteq A$ is defined as a block of group $A$ if
$w_{\mathcal{U}\cap B}\leq w_B$ for each upper
set $\mathcal{U}$ of $A$ such that $\mathcal{U}\cap B \ne \{\}$ (or
equivalently if  $w_{\mathcal{L}\cap B}\geq
w_B$ for each lower set $\mathcal{L}$ of $A$ such that
$\mathcal{L}\cap B \ne \{\}$). We denote the optimal solution for minimizing $f(y)$ in the variable $y$ by $y^*$, i.e., $y^*=\argmin{f(y)}$.  The notation $(\partial f(\hat{y})/\partial \hat{y})|_{y}$ denotes the derivative of a function $f$ with respect to the variable $\hat{y}$ evaluated at the point $y$.

\section{Generalized Isotonic Recursive Partitioning with Convex Loss Functions} \label{s:cirp}
In this section, we generalize the results for $l_2$ isotonic partitioning resulting in an algorithm termed Generalized Isotonic Recursive Partitioning (GIRP), and derive useful properties of this generalization.  The solution at each iteration, as in IRP, is defined by groups that are proven to be the union of blocks in the optimal solution.  Section \ref{ss:irp} first gives an overview of the IRP algorithm.  Section \ref{ss:splitting_algo} then details the partitioning step for the generalized case and derives the resulting GIRP algorithm.  When $f_i(\hat{y}_i)=(\hat{y}_i-y_i)^2$, that is, $f(\cdot)$ is the $l_2$ loss function, all results in this section replicate those of IRP and $w_A$ becomes the average of the observations in group $A$.  Section \ref{ss:path_conv} proves convergence of the partitioning algorithm to the global optimal solution of (\ref{eq:IC}) and shows that the solution at each iteration of the algorithm is isotonic, i.e., the iterations provide a regularization path of isotonic solutions.

\subsection{Isotonic Recursive Partitioning with the $l_2$ Loss Function} \label{ss:irp}
We here briefly review the ideas of the IRP algorithm of \citeasnoun{Spou2003} and \citeasnoun{Luss2012}.  The optimal solution to the $l_2$ isotonic regression problem (\ref{eq:IR}) is known to be defined by a partitioning of the observations $y_i$ into blocks in which $\hat{y}_i=\hat{y}_j$ if observations $y_i$ and $y_j$ are in the same block.  Indeed, this structure can be seen through the optimality conditions (i.e., Karush-Kuhn-Tucker (KKT) conditions, see \citeasnoun{Boyd2004}) for problem (\ref{eq:IR}).  The optimal solution to (\ref{eq:IR}), denoted by $\hat{y}^*$, satisfies these conditions, which are given by
\begin{enumerate}[label=(\alph*)]
\item $2(\hat{y}^*_i-y_i)=\sum_{j:(j,i)\in\mathcal{I}}{\lambda^*_{ji}}-\sum_{j:(i,j)\in\mathcal{I}}{\lambda^*_{ij}}
\quad\forall i\in V$
\item $\hat{y}^*_i\leq\hat{y}^*_j\quad\forall(i,j)\in\mathcal{I}$
\item $\lambda^*_{ij}\geq0\quad\forall (i,j)\in\mathcal{I}$
\item $\lambda^*_{ij}(\hat{y}^*_i-\hat{y}^*_j)=0\quad\forall
(i,j)\in\mathcal{I}$,
\end{enumerate}
where $\lambda_{ij}^*$ is the optimal dual variable associated with isotonicity constraint $\hat{y_i}\leq\hat{y_j}$.  Convexity of the $l_2$ loss function implies that any solution satisfying conditions (a)-(d) is a globally optimal solution. From condition (d), $\lambda_{ij}^*>0\Rightarrow\hat{y}^*_i=\hat{y}_j^*$, i.e., the optimal solution is made of blocks $V$ where $\lambda_{ij}^*>0$ for all $i,j\in V$, which implies that all observations within a block $V$ are fit to the same value.  Furthermore, when restricting all fits within a block $V$ to be equivalent, the isotonic regression problem over block $V$ is the following unconstrained quadratic program
\begin{equation} \label{eq:irp_weight}
\min_{\hat{y}\in\reals}{\{\displaystyle\sum_{i\in V}{(\hat{y}-y_i)^2}\}}
\end{equation}
which is trivially solved at $\hat{y}^*=\overline{y}_V$ where $\overline{y}_V$ denotes the average of all observations in block V. Condition (b) implies that these averages must satisfy isotonicity, i.e., if $V^-\preceq V^+$ are isotonic blocks then $\overline{y}_{V^-}\leq \overline{y}_{V^+}$.  Thus, the structure of the optimal solution is a partitioning of the set $\{1,\ldots, n\}$ into some (unknown number) $K$ blocks $\{V_1,\ldots,V_K\}$ where $\hat{y}_i^*=\overline{y}_{V_k}$ for all $i\in V_k$ and $\hat{y}^*_i\leq\hat{y}^*_j$ for all $(i,j)\in\mathcal{I}$.

Many such feasible partitions exist that are not optimal (e.g., set all fits to the average of all observations).  Condition (a) above must also be satisfied and gives the motivation for how to partition the set of observations in a manner that leads to the optimal partitioning.  The partitioning scheme is detailed for the general case of convex loss functions in the next subsection.  Here, we only give a general idea of how IRP works.

IRP starts with the entire dataset as one group $V$ and iteratively splits it into an increasing number of groups, until the optimal solution of (\ref{eq:IR}) is reached. At each iteration, the algorithm chooses a sub-optimal group and partitions it into two groups by solving a specially structured linear program, detailed in the next subsection, that is amenable to very efficient algorithms.  If the partition puts all observations into one group, it can be shown that the group is a block, i.e., optimal.  Otherwise, the fits in the two resulting groups are recalculated as their averages (via (\ref{eq:irp_weight}) above), while the rest of the groups and their fits remain at their values in the previous iteration.  IRP is thus an iterative scheme that splits a group at each iteration and never merges two groups back together; therefore, IRP is referred to as a no-regret partitioning algorithm.


Two important theorems are proven \cite{Luss2012} with respect to IRP. The first states that the new solution obtained after each partitioning step still satisfies isotonicity.  After iteration $K$, there are $K+1$ groups, $V_1,\ldots,V_{K+1}$, in the partitioning with $\hat{y}_i$ fit to $\overline{y}_{V_k}$ for all $i\in V_k$ with $k\in\{1,\ldots,K+1\}$.  The theorem thus says, that at each iteration $K$, the fits from the partitioning $V_1,\ldots,V_{K+1}$ provide a potential isotonic prediction model.   The second theorem shows that the IRP scheme terminates at the globally optimal partitioning.  Hence, IRP produces a path of increasingly complex (since each iteration adds a partition) isotonic solutions, terminating in the optimal solution of (\ref{eq:IR}).  These theorems are made possible because of the particular splitting criterion used in the IRP algorithm, which is amenable to efficient calculation as mentioned above. The generalized version of the splitting criterion and the resulting algorithm are discussed next.

\subsection{The partitioning algorithm} \label{ss:splitting_algo}
As with IRP, we solve a sequence of subproblems in order to solve the generalized isotonic regression problem (\ref{eq:IC}); each subproblem divides a group of observations into two groups at each iteration.  An important property of IRP with the $l_2$ loss function is that observations separated at one iteration remain separated at all future iterations.  The same property applies here and implies that the total number of iterations is bounded by the number of observations $n$.

The partitioning algorithm is motivated by the optimality conditions for the generalized isotonic regression problem (\ref{eq:IC}).  The optimal solution to (\ref{eq:IC}), denoted by $\hat{y}^*$, are identical to conditions (a)-(d) in Section \ref{ss:irp} above, with the exception that condition (a) now has the generalized form
\begin{enumerate}[label=(\alph*)]
\item
$\frac{\partial f_i(\hat{y}_i)}{\partial \hat{y}_i}\big|_{\hat{y}_i^*}=\sum_{j:(j,i)\in\mathcal{I}}{\lambda^*_{ji}}-\sum_{j:(i,j)\in\mathcal{I}}{\lambda^*_{ij}}\quad\forall i\in V$
\end{enumerate}
where again $\lambda_{ij}^*$ is the optimal dual variable associated with isotonicity constraint $\hat{y_i}\leq\hat{y_j}$.  Convexity of the loss function again implies that any solution satisfying the optimality conditions is a globally optimal solution.  The structure of the optimal solution as a partitioning of isotonic blocks can be seen from the KKT conditions as described in Section \ref{ss:irp}.
Within each block, the fit to each observation for the general case is taken to be the weight of the observations in the block as defined by (\ref{eq:weight}).  Isotonicity of the two blocks $V^-$ and $V^+$, i.e., $V^-\preceq V^+$, means that $w_{V^-}\leq w_{V^+}$.  From condition (a), summing over all observations in a block $V$, i.e., optimal group, gives
\begin{equation} \label{eq:partition_criterion}
\displaystyle\sum_{i\in V}{\frac{\partial f_i(\hat{y}_i)}{\partial \hat{y}_i}}\bigg|_{\hat{y}_i^*}=0.
\end{equation}

Derivation of the partitioning step is as follows.  Consider a group $V$ where $\hat{y}_i^*=w_V$ for all $i\in V$.   If $V$ is an optimal group, it is a block and must satisfy (\ref{eq:partition_criterion}).  If it is not optimal, however, we can find a partitioning of $V$ into two isotonic groups $V^+$ and $V^-$ such that
\begin{equation} \label{eq:partition_criterion_ineq}
\sum_{i\in V^+}
{\frac{\partial f_i(\hat{y}_i)}{\partial \hat{y}_i}\bigg|_{w_V}}-\sum_{i\in V^-}{\frac{\partial f_i(\hat{y}_i)}{\partial \hat{y}_i}\bigg|_{w_V}}<0.
\end{equation}
The first summation over $i\in V^+$ is the change in the objective value of problem (\ref{eq:IC}) due to an increase in the fits of observations in $V^+$.  The second summation over $i\in V^-$ is the change in the objective value due to a decrease in the fits of observations in $V^-$.  Such a partition thus means that increasing the fits in $V^+$ to be greater than $w_V$ while decreasing the fits in $V^-$ to be less than $w_V$ (which by definition maintains isotonicity of the fits) will cause an overall decrease in the objective value to problem (\ref{eq:IC}).  Fits that decrease the overall objective value can be achieved by fitting the observations in $V^+$ and $V^-$ to their respective weights $w_{V^+}$ and $w_{V^-}$.  Hence, we search for an isotonic partitioning of $V$ into $V^+$ and $V^-$ that minimizes the lefthand term in (\ref{eq:partition_criterion_ineq}).

Denote by $\mathcal{C}_V=\{(V^-,V^+):V^-,V^+\subseteq V,V^-\cup V^+=V,V^-\cap V^+=\{\},\not\exists x\in V^-,y\in V^+\quad\mbox{s.t.}\quad y\preceq x\}$
the set of all feasible (i.e., isotonic) partitions defined by observations in $V$.  Partitioning is referred to as making a cut through the variable space (hence the optimal partition is made by an \emph{optimal cut}).  The optimal cut is determined as
the partition that solves the problem
\begin{equation}
\min_{(V^-,V^+)\in\mathcal{C}_V}{\{\sum_{i\in V^+}
{\frac{\partial f_i(\hat{y}_i)}{\partial \hat{y}_i}\bigg|_{w_V}}-\sum_{i\in V^-}{\frac{\partial f_i(\hat{y}_i)}{\partial \hat{y}_i}\bigg|_{w_V}}\}}
\label{eq:optimal_cut}
\end{equation}
where $V^-$($V^+$) is the group on the lower (upper) side of the edges of the cut.  The optimal cut
problem (\ref{eq:optimal_cut}) can be expressed as the binary
program
\begin{equation}
\min{\{\displaystyle\sum_{i\in V}{x_i\frac{\partial f_i(\hat{y}_i)}{\partial\hat{y}_i}\bigg|_{w_V}} : x_i\leq x_j\quad\forall (i,j)\in\mathcal{I}, x_i\in\{-1,+1\}\quad\forall i\in V\}}.  \label{eq:optimal_cut_bp}\end{equation}
It is well-known \cite{Murt1983} that the continuous relaxation to this binary program (i.e., replacing the constraints $x_i\in\{-1,+1\}$ by $-1\leq x_i\leq1$ for all $i\in V$) is solved on the boundary of the feasible region with $x_i^*\in\{-1,+1\}$ for all $i=1\ldots n$.  Thus the optimal cut problem (\ref{eq:optimal_cut}) is equivalent to solving the linear program
\begin{equation}
\min{\{z^Tx : x_i\leq x_j\quad\forall (i,j)\in\mathcal{I}, -1\leq x_i\leq1\quad\forall i\in V\}}  \label{eq:optimal_cut_lp}\end{equation}
where $z_i=(\partial f_i(\hat{y}_i)/\partial\hat{y}_i)|_{w_V}$.  Problem (\ref{eq:optimal_cut_lp}) with $z_i=2(\overline{y}_V-y_i)$ gives the linear program used to make partitions in IRP with $l_2$ loss function as described above in Section \ref{ss:irp}.  As seen by property (\ref{eq:partition_criterion_ineq}), a property of this optimal cut for generalized isotonic regression is that the sum of loss functions with $x_i=+1$ ($x_i=-1$) can be decreased by increasing (decreasing) the corresponding fits.  That is, by increased (decreasing) $w_v$ for observations $i$ with $x_i=+1$ ($x_i=-1$), the total change in loss is decreased, i.e.,
\begin{equation} \label{eq:cut_properties}
\sum_{\{i:x_i=+1\}}{\frac{\partial f_i(\hat{y}_i)}{\partial\hat{y}_i}\bigg|_{w_V}}\leq0\quad\mbox{and}\quad\sum_{\{i:x_i=-1\}}{\frac{\partial f_i(\hat{y}_i)}{\partial\hat{y}_i}\bigg|_{w_V}}\geq0.
\end{equation}

This group-wise partitioning operation is the basis for our algorithm which is detailed in Algorithm \ref{alg:GIRP}.  The algorithm differs from IRP only in Step 6 (they are obviously identical when $f(\cdot)$ is the $l_2$ loss).  Initially, all observations are in one group.  Each iteration splits a group optimally by solving subproblem~(\ref{eq:optimal_cut_lp}). A list $\mC$ of potential optimal cuts for each group generated thus far is maintained, and, at each iteration, the cut among them with the smallest (most negative) objective value is performed. Partitioning of a group ends when the solution to~(\ref{eq:optimal_cut_lp}) is trivial (i.e., no split is found
because the group is a block).  The algorithm stops when no further groups can be partitioned.

\begin{algorithm}
\caption{Generalized Isotonic Recursive Partitioning}
\begin{algorithmic} [1]
\REQUIRE Observations $(x_1,y_1),\ldots,(x_n,y_n)$ and partial order
$\mathcal{I}$.
\REQUIRE $k=0,\mathcal{A}=\{\{x_1,\ldots,x_n\}\}$,$\mathcal{C}=\{(0,\{x_1,\ldots,x_n\},\{\})\}$,$\mathcal{B}=\{\}$, $M_0=(\mathcal{A},w_\mathcal{A})$.
\WHILE{$\mathcal{A}\ne\{\}$}
\STATE Let $(val,w^-,w^+)\in\mathcal{C}$ be the potential partition with largest $val$.
\STATE Update $\mathcal{A}=(\mathcal{A}\setminus (w^-\cup w^+))\cup\{w^-,w^+\}$, $\mathcal{C}=\mathcal{C}\setminus(val,w^-,w^+)$.
\STATE $M_k=(\mathcal{A},\overline{y}_{\mathcal{A}})$.
\FORALL{$v\in\{w^-,w^+\}\setminus\{\}$}
\STATE Set $c_i=\frac{\partial f_i(\hat{y}_i)}{\partial\hat{y}_i}\big|_{w_v}$ $\forall i\in v$ where $w_v$ is the weight (\ref{eq:weight}) of the observations in $v$.
\STATE Solve LP (\ref{eq:optimal_cut_lp}) with input $z$ and get $z^*=\argmin{\mbox{LP}(\ref{eq:optimal_cut_lp})}$.
\IF{$z_1^*=\ldots=z_n^*$ (group is optimally divided)}
\STATE Update $\mathcal{A}=\mathcal{A}\setminus v$ and $\mathcal{B}=\mathcal{B}\cup \{v\}$. \ELSE
\STATE Let $v^-=\{x_i:z^*_i=-1\}, v^+=\{x_i:z^*_i=+1\}$.
\STATE Update $\mathcal{C}=\mathcal{C}\cup\{(c^Tz^*,v^-,v^+)\}$
\ENDIF
\ENDFOR
\STATE k=k+1.
\ENDWHILE
\RETURN $\mathcal{M}$, a sequence of isotonic models, where $M_k$ contains the $k^{th}$ iteration's partitioning of observations and corresponding group weights.
\end{algorithmic}
\label{alg:GIRP}
\end{algorithm}

Each iteration $k$ of Algorithm \ref{alg:GIRP} produces a model $M_k$ by fitting each group in $M_k$ to its weight. For a set of groups $\mathcal{V}=\{V_1,\ldots,V_k\}$, denote $w_\mathcal{V}=\{w_{V_1},\ldots,w_{V_k}\}$.  Then model $M_k=(\mathcal{V},w_\mathcal{V})$ contains the partitioning $\mathcal{V}$ as well as a fit to each of the observations, which is the weight, as defined by (\ref{eq:weight}), of the group it belongs to in the partition.

\subsection{Properties of the partitioning algorithm} \label{ss:path_conv}
So far, we have detailed the partitioning algorithm which is based on iteratively solving problem (\ref{eq:optimal_cut_lp}), but we have not yet shown that partitioning according to this particular scheme, i.e., solving problem (\ref{eq:optimal_cut_lp}), optimally solves the generalized isotonic regression problem. Theorem \ref{th:no_regret_cut} next states the main result that implies Algorithm \ref{alg:GIRP} is a no-regret partitioning algorithm for (\ref{eq:IC}) (no-regret in the same sense as described for IRP in Section \ref{ss:irp}).  In the case of $l_2$ isotonic regression, this result is already known \cite{Maxw1985,Spou2003,Luss2012}.  This theorem leads to our convergence result.  The proof requires straightforward changes to the proof in \citeasnoun{Luss2012} based on the definition of convexity, the new algorithm cut in (\ref{eq:optimal_cut_lp}), and its properties (\ref{eq:cut_properties}); the proof is thus left to the Appendix.

\begin{theorem} \label{th:no_regret_cut}
Assume group $V$ is the union of blocks from the
optimal solution to problem (\ref{eq:IC}). Then a cut made by
solving (\ref{eq:optimal_cut_lp}) at a particular iteration does not cut through any block in the global optimal solution.
\end{theorem}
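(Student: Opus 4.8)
The plan is to turn the linear program (\ref{eq:optimal_cut_lp}) into a combinatorial selection problem over lower sets of $V$ and then to exhibit an optimizer of that problem which is a union of whole blocks. First I would record the one identity that drives everything: writing $z_i=(\partial f_i(\hat y_i)/\partial\hat y_i)|_{w_V}=f_i'(w_V)$, the definition $w_V=\argmin_{z}\sum_{i\in V}f_i(z)$ together with differentiability of $f_i$ gives the first-order condition $\sum_{i\in V}z_i=0$ (this is (\ref{eq:partition_criterion}) applied to the current group). Since, as noted in the text, (\ref{eq:optimal_cut_lp}) has an optimal solution with all $x^*_i\in\{-1,+1\}$, I may work with such a vertex: then $V^+=\{i:x^*_i=+1\}$ is an upper set of $V$, $V^-=\{i:x^*_i=-1\}$ is a lower set, and $\sum_{i\in V}z_ix^*_i=\sum_{i\in V^+}z_i-\sum_{i\in V^-}z_i=2\sum_{i\in V^+}z_i=-2\sum_{i\in V^-}z_i$ using $\sum_{i\in V}z_i=0$. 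Hence solving (\ref{eq:optimal_cut_lp}) is the same as choosing a lower set $L$ of $V$ that maximizes $\sum_{i\in L}z_i$, setting $V^-=L$ and $V^+=V\setminus L$.

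Next I would isolate the only place convexity is used. Because each $f_i'$ is nondecreasing, for any nonempty $S\subseteq V$ the sign of $\sum_{i\in S}z_i=\sum_{i\in S}f_i'(w_V)$ is governed by comparing $w_S$ with $w_V$: comparing term by term with $\sum_{i\in S}f_i'(w_S)=0$ gives $w_S\le w_V\Rightarrow\sum_{i\in S}z_i\ge0$ and $w_S\ge w_V\Rightarrow\sum_{i\in S}z_i\le0$ (these are exactly the properties (\ref{eq:cut_properties}) applied to sub-blocks rather than to $V^\pm$). By the hypothesis of the theorem, write $V=B_1\cup\cdots\cup B_p$ as the union of the blocks of the optimal solution it comprises. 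Each $B_l$ is a block, so $w_T\le w_{B_l}$ for every nonempty upper set $T$ of $B_l$ and $w_S\ge w_{B_l}$ for every nonempty lower set $S$ of $B_l$; moreover the blocks of the optimal solution form an isotonic partition, so $B_m\preceq B_l\Rightarrow w_{B_m}\le w_{B_l}$. Setting $s_l=\sum_{i\in B_l}z_i$, the sign rule above gives $s_l\ge0$ whenever $w_{B_l}\le w_V$ and $s_l\le0$ whenever $w_{B_l}\ge w_V$.

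The crux is then a per-block bound: for every lower set $L$ of $V$ and every block $B_l$ one has $\sum_{i\in L\cap B_l}z_i\le\max(s_l,0)$. Indeed $L\cap B_l$ is a lower set of $B_l$. If $w_{B_l}\le w_V$, the complement of $L\cap B_l$ inside $B_l$ is an upper set of $B_l$ with weight $\le w_{B_l}\le w_V$, so $\sum_{i\in B_l\setminus(L\cap B_l)}z_i\ge0$, i.e. $\sum_{i\in L\cap B_l}z_i\le s_l=\max(s_l,0)$; if $w_{B_l}\ge w_V$, then $w_{L\cap B_l}\ge w_{B_l}\ge w_V$, so $\sum_{i\in L\cap B_l}z_i\le0=\max(s_l,0)$ (the empty intersection is trivial in both cases). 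Summing over $l$ and using $\sum_l\max(s_l,0)=\sum_{\{l:\,w_{B_l}\le w_V\}}s_l$ yields $\sum_{i\in L}z_i\le\sum_{\{l:\,w_{B_l}\le w_V\}}s_l$ for every lower set $L$ of $V$. Finally $L^{\circ}=\bigcup\{B_l:w_{B_l}\le w_V\}$ is itself a lower set of $V$ (the family $\{B_l:w_{B_l}\le w_V\}$ is downward closed in the block order, since $B_m\preceq B_l\Rightarrow w_{B_m}\le w_{B_l}$), and it attains the bound, so it is an optimal lower set. Since $L^{\circ}$ is a union of entire blocks, the cut $(V^-,V^+)=(L^{\circ},V\setminus L^{\circ})$ separates the $B_l$ without cutting through any of them, which is the assertion.

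The step I expect to require the most care is not this inequality but the surrounding bookkeeping. First, one must justify that a block of the global optimal solution that lies inside $V$ really does satisfy the block property in the ``internal'' form used above: for a block of $V$ this is the stated definition with $\mathcal{U}$ taken to be the up-closure (inside $V$) of the given upper set of $B_l$, but the reduction should be written out carefully, along with the fact that $L\cap B_l$ and $B_l\setminus(L\cap B_l)$ are lower/upper sets of $B_l$ when $L$ is a lower set of $V$. Second, in the degenerate case where some $f_i'$ is locally constant, $\max_{L}\sum_{i\in L}z_i$ may also be attained at lower sets that split a block; this is handled by taking the blocks of the optimal solution to be its maximal level sets (so that comparable blocks have distinct weights) and having the algorithm return the block-respecting optimizer $L^{\circ}$ — consistent with the algorithm solving (\ref{eq:optimal_cut_lp}) at a vertex and reporting a trivial cut whenever the optimal value is $0$. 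Apart from these points the argument is the $l_2$ proof of \citeasnoun{Luss2012}, with averages $\overline{y}_V$ replaced by weights $w_V$ and residuals $2(\overline{y}_V-y_i)$ replaced by $z_i=f_i'(w_V)$.
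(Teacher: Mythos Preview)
Your argument is correct and takes a genuinely different route from the paper. The paper proceeds by contradiction: it supposes the optimal cut slices through blocks $M_1,\ldots,M_K$ (with $M_1$ minorant and $M_K$ majorant among them), uses optimality of the cut to get $\sum_{i\in M_1^U}z_i\le 0$, combines this with a convexity inequality to deduce $w_{M_1^U}\ge w_V$, and then invokes the block property and isotonicity to arrive at $w_{M_1}>w_V>w_{M_K}$, contradicting $w_{M_1}\le w_{M_K}$. You instead recast (\ref{eq:optimal_cut_lp}) as maximizing $\sum_{i\in L}z_i$ over lower sets $L$ of $V$, prove the per-block upper bound $\sum_{i\in L\cap B_l}z_i\le\max(s_l,0)$ directly from the block definition and the monotonicity of $f_i'$, and then \emph{exhibit} the maximizer $L^{\circ}=\bigcup\{B_l:w_{B_l}\le w_V\}$. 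What your approach buys is constructiveness: it identifies exactly which union of blocks the cut should be, which is informative for the algorithm and for intuition. What the paper's contradiction buys (when the inequalities are strict) is the stronger statement that \emph{every} optimizer of (\ref{eq:optimal_cut_lp}) respects the blocks, whereas your argument as written only guarantees that \emph{some} optimizer does; you correctly flag this and handle it by tie-breaking and by taking blocks to be maximal level sets. It is worth noting that the paper's own proof has the same looseness in the degenerate case (the step ``$\sum_{i\in M_1^U}z_i\le 0$ so $w_{M_1^U}\ge w_V$'' needs the sum to be strictly negative), so the two arguments are on equal footing in that respect; under strict convexity of the $f_i$ both go through without qualification.
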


The case of multiple observations at the same coordinates can be disregarded.  Let $\mathcal{V}$ denote a set of groups where each group in $\mathcal{V}$ contains observations with the same coordinates, i.e $V\in\mathcal{V}$ denotes the indices of multiple observations and $|V|=1$ means that $V$ is a single observation.  Then, we define $g_V(\hat{y}_V)=\sum_{i\in V}{f_i(\hat{y}_V)}$ where $\hat{y}_V\in\reals$ and modify $f(\cdot)$ in the generalized isotonic regression problem (\ref{eq:IC}) to be
$f(\hat{y})=\sum_{V\in\mathcal{V}}{g_V(\hat{y}_V)}$ where $\hat{y}\in\reals^{|\mathcal{V}|}$ and each function $g_V(\cdot)$ satisfies the necessary properties for applying GIRP.

Since Algorithm~\ref{alg:GIRP} starts with the union of all blocks for the first partition, we can conclude from this theorem that Algorithm~\ref{alg:GIRP} never cuts a block when generating partitions.  From the derivation of the partitioning problem, it is clear that if an isotonic partition can be made, it will be made; that is, the algorithm will not stop early.  Convergence of Algorithm \ref{alg:GIRP} to the global isotonic solution with no regret then follows by repeatedly applying Theorem \ref{th:no_regret_cut} until all blocks of the optimal solution are identified.  The next theorem states that Algorithm \ref{alg:GIRP} provides isotonic solutions at each iteration.  This result implies that the path of solutions generated by Algorithm \ref{alg:GIRP} can be regarded as a regularization path for the generalized isotonic regression problem (\ref{eq:IC}).  Proof of this theorem is again held until the Appendix for the same reasons given above.
\begin{theorem} \label{th:isotonic_solutions}
Model $M_k$ generated after iteration $k$ of Algorithm \ref{alg:GIRP} is in the class of isotonic models.
\end{theorem}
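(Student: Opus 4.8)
The plan is to induct on the iteration index $k$. The base case is immediate: $M_0$ consists of the single group $\{x_1,\dots,x_n\}$, so there are no two distinct groups and the isotonicity requirement ``$A\preceq B\Rightarrow w_A\le w_B$'' holds vacuously. For the inductive step assume $M_{k-1}$ is isotonic. Iteration $k$ takes one group $V$ of $M_{k-1}$, all of whose observations are fit to $w_V$, and replaces it by the two children $V^-$ and $V^+$ of the optimal cut ($V^-$ on the lower side, $V^+$ on the upper side), leaving every other group and every other fitted value unchanged. Hence $M_k$ is isotonic as soon as the weights of $V^-$ and $V^+$ are shown to be consistent with the order, both relative to one another and relative to each unchanged group.

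The first ingredient is the chain $w_{V^-}\le w_V\le w_{V^+}$. By the cut properties (\ref{eq:cut_properties}) we have $\sum_{i\in V^+}(\partial f_i(\hat{y}_i)/\partial\hat{y}_i)\big|_{w_V}\le 0$ and $\sum_{i\in V^-}(\partial f_i(\hat{y}_i)/\partial\hat{y}_i)\big|_{w_V}\ge 0$. Since each $f_i$ is convex, $z\mapsto\sum_{i\in V^+}(\partial f_i(\hat{y}_i)/\partial\hat{y}_i)\big|_{z}$ is nondecreasing and vanishes at its minimizer $w_{V^+}$; a nonpositive value at $w_V$ therefore forces $w_{V^+}\ge w_V$, and symmetrically $w_{V^-}\le w_V$. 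In particular $w_{V^-}\le w_{V^+}$; and since feasibility of the cut forbids any point of $V^+$ lying below a point of $V^-$, the only possible order between the two children is $V^-\preceq V^+$, which is now consistent.

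It remains to compare a child with an unchanged group $C$. Two of the four orderings follow from the same monotonicity bound together with the induction hypothesis: if $V^-\preceq C$ then, since $V^-\subseteq V$ and $M_{k-1}$ was isotonic, $w_{V^-}\le w_V\le w_C$; and if $C\preceq V^+$ then $w_C\le w_V\le w_{V^+}$. The crux is the remaining pair, $V^+\preceq C$ (which needs $w_{V^+}\le w_C$) and its mirror image $C\preceq V^-$ (which needs $w_C\le w_{V^-}$), where the available bound $w_{V^+}\ge w_V$, respectively $w_{V^-}\le w_V$, runs the wrong way. For these I would invoke Theorem~\ref{th:no_regret_cut}: $V^-$, $V^+$ and $C$ are all unions of blocks of the global optimum $\hat{y}^*$, on whose blocks the fits equal the block weights and respect isotonicity. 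One then argues by contradiction --- a violation $w_{V^+}>w_C$ would, since $w_{V^+}$ lies between the weights of the blocks that make up $V^+$, force some block of $V^+$ to sit strictly above every block of $C$ in weight even though $V^+\preceq C$, and this can be shown to clash either with optimality of the cut of $V$ or with the no-regret structure (it would require the cut to cross a block, or $C$ to have been mergeable with $V$ at an earlier iteration). I expect this last point --- the poset-and-block bookkeeping that propagates the correct order from $V$ onto its two children relative to every old neighbor --- to be the main obstacle; the rest is routine, and once it is in place, combining the cases with the induction hypothesis shows $M_k$ is isotonic and closes the induction.
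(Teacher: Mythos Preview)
Your induction framework, the chain $w_{V^-}\le w_V\le w_{V^+}$ derived from the cut properties~(\ref{eq:cut_properties}), and the two ``easy'' neighbor orderings are correct and agree with the paper's argument. The gap is exactly where you flag it.

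The route through Theorem~\ref{th:no_regret_cut} does not close. From $w_{V^+}>w_C$ and the fact that a group's weight lies between the minimum and maximum of its constituent block weights, you can extract only that \emph{some} block $B\subseteq V^+$ satisfies $w_B\ge w_{V^+}>w_C\ge w_{B'}$ for \emph{some} block $B'\subseteq C$. But the relation $V^+\preceq C$ guarantees merely a single ordered pair $x\preceq y$ with $x\in V^+$, $y\in C$ (and that no point of $C$ lies strictly below a point of $V^+$); it does \emph{not} force $B\preceq B'$ for those particular blocks. Incomparable blocks may carry their weights in either order without violating feasibility of $\hat y^*$, so no contradiction follows from the global block structure alone.

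The paper does not invoke Theorem~\ref{th:no_regret_cut} here at all. Its argument is purely from the algorithm's history and the optimality of an \emph{earlier} cut. Take the case $C\preceq V^-$ with $w_C>w_{V^-}$. Among all such violating lower neighbors, let $D$ be the one separated from (an ancestor of) the current group most recently, at some iteration $k_0$ where an ancestor $F\cup H$ was cut into a lower part $F\supseteq D$ and an upper part $H\supseteq V$. Form the upper set $D_U\subseteq F$ consisting of $D$ together with all current groups inside $F$ that majorize $D$, and the lower set $A_L\subseteq H$ consisting of $V^-$ together with all current groups inside $H$ that minorize $V^-$. Optimality of the iteration-$k_0$ cut --- either of these sets could have been moved to the other side but was not --- yields $w_{D_U}\le w_{F\cup H}\le w_{A_L}$. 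The choice of $D$ as the \emph{last} violator to be split off, combined with the inductive hypothesis, forces every group in $A_L$ other than $V^-$ to have weight at most $w_{V^-}$, hence $w_{A_L}\le w_{V^-}$; and the inductive hypothesis likewise gives $w_{D_U}\ge w_D$. Chaining,
\[
w_{F\cup H}\le w_{A_L}\le w_{V^-}<w_D\le w_{D_U}\le w_{F\cup H},
\]
a contradiction. Your parenthetical about $C$ being ``mergeable with $V$ at an earlier iteration'' gestures toward the right object --- the earlier split --- but the actual lever is optimality of that earlier cut applied to these carefully chosen upper and lower sets, not the global block decomposition.
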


Complexity analysis of Algorithm \ref{alg:GIRP} depends on the number of observations $n$ and isotonic constraints $m$, and the complexity of solving linear program (\ref{eq:optimal_cut_lp}).  Firstly, we assume that computing the weight of a group $V$ via (\ref{eq:weight}) requires computationally less effort than solving problem (\ref{eq:optimal_cut_lp}) (in practice these problems are one-dimensional convex minimization problems that are easily solved with a binary search). In short, linear program (\ref{eq:optimal_cut_lp}) is dual to a linear maximum flow network problem \cite{networkflows}, which is a well-studied problem.  It can be solved in $O(mn\log{n})$ \cite{Slea1983} or $O(n^3)$ \cite{Gali1980} in the general case that we consider; special cases such as $n=2$ or where the observations lie on a grid can be computed even faster \cite{Spou2003}.  Choice of algorithm depends on $m$ which is $O(n^2)$ in the worst case.   Given GIRP requires at most $n$ iterations, this leads to worst case complexities of $O(mn^2\log{n})$ or $O(n^4)$.  A recent problem reduction by \citeasnoun{Stou2010} can be used to obtain an equivalent representation of the desired problem with $d$-dimensional data and $O(n\log^{d-1}{n})$ constraints and observations, which can be useful when $m$ is large. Finally, \citeasnoun{Luss2012} show that IRP performs in $O(Cn^3)$ in practice, where $C$ is a function of the fraction of observations on each of the cut at each iteration.  The same result applies here.

\subsection{Relations to Other Generalized Isotonic Regressions} \label{ss:barlow_brunk}
We here formalize the relationship between GIRP and the work of \citeasnoun{Barl1972}, which was hinted at in \citeasnoun{Luss2012} and mentioned in the introduction above.  The generalized isotonic regression problem of \citeasnoun{Barl1972} is of the form
\begin{equation} \label{eq:IR_gen_barlow}
\min{\{\displaystyle\sum_{i=1}^n{\Phi(\hat{y}_i)-\hat{y}_iy_i} : \hat{y}_i\leq\hat{y}_j\quad\forall(i,j)\in\mathcal{I}\}}
\end{equation}
in $\hat{y}\in\reals^n$ where we have left out weights for simplicity.  While they allow $\Phi : \reals\rightarrow\reals$ to be nondifferentiable, we consider here only the differentiable case and denote $\phi(\cdot)$ as the derivative of $\Phi(\cdot)$.  Let $\hat{z}^*$ be the solution of (\ref{eq:IR}) (i.e., $l_2$ isotonic regression) with given observations $y_i$.  Theorem 3.1 of \citeasnoun{Barl1972} claims that the solution to (\ref{eq:IR_gen_barlow}) can be obtained as
\begin{equation} \label{eq:barlow_transform}
\hat{y}_i^*=\phi^{-1}(\hat{z}_i^*)
\end{equation}
where $\phi^{-1}(\cdot)$ is defined by $\phi^{-1}(\phi(x))=x$ for all $x\in\reals$.  Thus, any objective of the form (\ref{eq:IR_gen_barlow}) can be solved by computing the $l_2$ isotonic regression on input observations and then transforming the solution using (\ref{eq:barlow_transform}). In this manner, IRP can be used to solved the somewhat limited class of generalized isotonic regression problems defined in \citeasnoun{Barl1972} (note that without requiring $\Phi(\cdot)$ proper, their theory would apply to any convex loss function). However, it is also clear that Algorithm \ref{alg:GIRP} provides the tools for solving more general isotonic regression problems than (\ref{eq:IR_gen_barlow}), e.g., as in the case for the $p$-norm or Huber's loss function.

The same transformation from \citeasnoun{Barl1972} can be used to derive an isotonic regularization path for generalized isotonic regression problems with the structure of (\ref{eq:IR_gen_barlow}). Indeed, this can be shown using the above framework for our general isotonic regression problem (\ref{eq:IC}), and is formalized in Proposition \ref{prop:barlow_girp}.

\begin{proposition} \label{prop:barlow_girp}
Problem (\ref{eq:IR_gen_barlow}) can be solved either by
\begin{enumerate}
\item Applying IRP to the observation data $y$ to obtain $\hat{z}^*$ and tranforming using (\ref{eq:barlow_transform}),
\item Applying GIRP directly to (\ref{eq:IR_gen_barlow}).
\end{enumerate}
Furthermore, both algorithms are equivalent when applied to (\ref{eq:IR_gen_barlow}) in the sense that the regularization path of partitions for each algorithm are equivalent.
\end{proposition}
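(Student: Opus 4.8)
The plan is to isolate the genuinely new content. Items (1) and (2) of the proposition are already in hand: (1) is precisely Theorem 3.1 of \citeasnoun{Barl1972} as restated in (\ref{eq:barlow_transform}), and (2) follows from Theorems \ref{th:no_regret_cut} and \ref{th:isotonic_solutions} together with the convergence discussion, once one observes that $f_i(\hat y_i)=\Phi(\hat y_i)-\hat y_i y_i$ is convex and differentiable with derivative $\phi$, which we take to be strictly monotone so that $\phi^{-1}$ (and hence the weights) are well defined. So the work is entirely in the ``furthermore'' clause: that running GIRP on (\ref{eq:IR_gen_barlow}) and running IRP (that is, GIRP with $f_i(\hat y_i)=(\hat y_i-y_i)^2$) on the data $y$ produce the same sequence of partitions.

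First I would pin down the group weight under the Barlow--Brunk loss. Minimizing $\sum_{i\in V}(\Phi(z)-z y_i)$ over $z\in\reals$ has first-order condition $\sum_{i\in V}(\phi(z)-y_i)=0$, i.e. $\phi(w_V)=\overline y_V$, so $w_V=\phi^{-1}(\overline y_V)$. Consequently the cost vector feeding the optimal-cut LP (\ref{eq:optimal_cut_lp}) for a group $V$ is $z_i=(\partial f_i(\hat y_i)/\partial \hat y_i)|_{w_V}=\phi(w_V)-y_i=\overline y_V-y_i$, whereas for IRP the corresponding cost vector on the same group is $2(\overline y_V-y_i)$. Thus, group by group, the two cost vectors differ only by the fixed positive scalar $2$.

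Next I would run an induction on the iteration counter showing the two executions stay in lock-step. Both initialize with the single group $\{x_1,\dots,x_n\}$ and the same list $\mathcal C=\{(0,\{x_1,\dots,x_n\},\{\})\}$, giving the base case. For the inductive step, assume that after iteration $k$ both runs hold the same partition $\mathcal A$ and that for every group $V\in\mathcal A$ the pending entry of $\mathcal C$ has the same split $(v^-,v^+)$ with $\mathrm{val}^{\mathrm{IRP}}_V=2\,\mathrm{val}^{\mathrm{GIRP}}_V$. Since multiplying a linear objective $z^T x$ by a positive constant leaves its set of optimizers unchanged and scales the optimal value by that constant, the LP in Step 7 returns the same $z^*$ in both runs for each group; in particular the block-detection test in Step 8 fires simultaneously, and otherwise $v^-,v^+$ coincide. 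Because every $\mathrm{val}$ entry of $\mathcal C$ is scaled by the common factor $2$, the ordering used in Step 2 is identical, so the same group is split, the same two children are inserted, and the ``scaled by $2$'' invariant is restored along with an identical partition $\mathcal A$. (Any ties, among optimal LP vertices or among $\mathrm{val}$ entries, are assumed broken by a common rule.) Iterating, the partition sequences $\{\mathcal V\}$ are identical; and since GIRP fits each group $V_\ell$ to $w_{V_\ell}=\phi^{-1}(\overline y_{V_\ell})$ while IRP fits it to $\overline y_{V_\ell}$, applying (\ref{eq:barlow_transform}) group-wise to $M_k^{\mathrm{IRP}}$ yields exactly $M_k^{\mathrm{GIRP}}$, which is the asserted equivalence of regularization paths.

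I do not expect a deep obstacle here; the one place that needs care is the bookkeeping of the cut list $\mathcal C$ in the inductive step --- verifying that the factor-of-$2$ relationship between the two runs is genuinely preserved across a full iteration, including the treatment of the trivial-cut case and of ties --- together with the mild structural assumption that $\phi$ is strictly monotone so that $\phi^{-1}$ and the weights $w_V$ make sense. The LP-scaling fact itself is immediate, and everything else is routine verification.
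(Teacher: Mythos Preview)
Your proposal is correct and follows essentially the same approach as the paper: both compute $w_V=\phi^{-1}(\overline y_V)$ and observe that the resulting cost vector $\overline y_V - y_i$ in the partitioning LP (\ref{eq:optimal_cut_lp}) coincides (up to the positive scalar $2$) with the IRP cost vector, so the optimal cuts agree. The paper stops there and simply asserts equivalence of the partition problems, whereas you go further and spell out the iteration-by-iteration induction, the factor-of-$2$ bookkeeping on $\mathcal C$, and the tie-breaking caveat; this extra care is sound but not a different route.
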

\begin{proof}
IRP can be used to solve the $l_2$ isotonic regression problem to obtain $\hat{z}^*$.  Application of Theorem 3.1 of \citeasnoun{Barl1972} gives the solution to (\ref{eq:IR_gen_barlow}) via (\ref{eq:barlow_transform}).  In order to apply GIRP, let $f_i(\hat{y}_i)=\Phi(\hat{y}_i)-\hat{y}_iy_i$ and $(\partial f_i(\hat{y}_i)/\partial\hat{y}_i)|_{\hat{y}_i}=\phi(\hat{y}_i)-y_i$ where $\phi(\cdot)$ denotes the derivative of $\Phi(\cdot)$. Then $w_V$ as defined by (\ref{eq:weight}) satisfies $\sum_i{(\phi(w_V)}-y_i)=0$ giving $w_V=\phi^{-1}(\overline{y}_V)$ where $\overline{y}_V$ denotes the mean observation over group $V$ and $\phi^{-1}(\cdot)$ is defined by $\phi^{-1}(\phi(x))=x$ for all $x\in\reals$.  Hence, here $(\partial f_i(\hat{y}_i)/\partial\hat{y}_i)|_{w_V}=\overline{y}_V-y_i$ and the GIRP partitioning problem (\ref{eq:optimal_cut_lp}) is equivalent to the corresponding partition problem in \citeasnoun{Luss2012}.
\end{proof}

The only difference between IRP and GIRP for solving problem (\ref{eq:IR_gen_barlow}) is that GIRP fits observations to the transformed isotonic regression fits $w_V=\phi^{-1}(\overline{y}_V)$ along the path, while IRP fits observations to the mean of the group's observations and the transformation is done on the final optimal partitioning.  It is easy to see that the transformation (\ref{eq:barlow_transform}) can be applied to each iteration of IRP along the path in order to obtain an equivalent path to that of GIRP.

This connection also relates the algorithm of \citeasnoun{Maxw1985}, which solves problem $(\ref{eq:IR_gen_barlow})$ with $\Phi(\hat{y}_i)=1/\hat{y}_i$, to IRP.  Due to the analysis here, these algorithms are actually equivalent.  Both the problem of \citeasnoun{Maxw1985} and $l_2$ isotonic regression are specific instances of the more general problem (\ref{eq:IC}) solved in this paper.  It should be noted that \citeasnoun{Maxw1985} did not make use of, or even recognize, the regularization path which plays a significant role for isotonic regression in dimension $d>1$.

Lastly, \citeasnoun{Hoch2003} offer another partitioning algorithm for problem (\ref{eq:IC}) with additional integer constraints.   GIRP, in the continuous case, solves cut problem (\ref{eq:optimal_cut}) because we know the fit within optimal groups (i.e., the weight of the group).  Rather, in the integer case, the cut problem (\ref{eq:optimal_cut}) is solved instead with the derivatives evaluated at some $\alpha$ taken as the median of an interval in which the optimal fits lie.  A theorem states that this partition problem divides the group into two groups $V^-$ and $V^+$, where optimal fits to observations in $V^-$ are less than $\alpha$ and optimal fits to observations in $V^+$ are greater than $\alpha$.  The problem is thus stated as determining a sequence $\alpha_1,\ldots,\alpha_l$ such that observations with optimal fits in the interval $(\alpha_i,\alpha_{i+1}]$ have the same optimal fit.  Given a criterion for determining when $\alpha$ is a breakpoint in this sequence, their algorithm can do better than a binary search.  In fact, they further suggest a method that has a worst-case complexity equivalent to solving three max-flow problems. The complexity comes from using the information in previous max-flow problems to start new max-flow problems.  A similar idea could possibly be applied in our continuous case where the search for breakpoints uses the group weight in the cut problem. This highly efficient algorithm does not provide the exact solution to the continuous case, but a regularization path based on the bounds they get when searching for breakpoints can be considered for the integer case, and in turn, for the problem on an $\epsilon$-grid.

\subsection{Regularization by recursive partitioning}
\label{ss:reg_by_partitioning}
GIRP obtains the solution to problem (\ref{eq:IC}) by recursively partitioning the covariate space into progressively smaller regions and fitting the best constant in each region, referred to here as the weight which is defined by (\ref{eq:weight}). As such, it is natural to think of the resulting sequence of models created from early stopping as a regularization path of models of increasing complexity, indexed by the number of iterations of the algorithm.  Other examples of using early stopping for regularization include training neural networks with back propagation \cite{Caru2000} and boosting \cite{Ross2004}.  Extensive experience of the usefulness of regularization in high dimensional fitting \cite{Wahb1990,Tibs1996,Schol2001}, and especially in nonparametric models like isotonic regression, suggests that regularization, embodied in this case by early stopping of the algorithm, can lead to reduced overfitting and hence improved predictive performance. As Theorem \ref{th:isotonic_solutions} indicates, when stopping early and fitting the weight to each region, we are guaranteed to obtain a feasible isotonic model.

While GIRP uses early stopping for regularization of the globally optimal isotonic model, we note that regularization commonly refers to learning a model by explicitly constraining the family of models that are considered, and optimizing over this family. Early stopping after the $k^{th}$ iteration of GIRP produces an isotonic model with $k$ cuts obtained through a sequence of local optimization problems.  However, this model is not the solution to any global optimization problem.  The $k^{th}$ model of GIRP is thus only one potential model with $k$ cuts that satisfies the isotonicity constraints.  One might, for example, seek a regularized isotonic model that minimizes loss subject to the isotonicity constraints such that exactly $k$ cuts are made. The $k^{th}$ model in this case has a clear interpretation and more flexibility than the $k^{th}$ GIRP model.  While this would certainly be an interesting problem to consider, it is combinatorially difficult and the authors do not know of any efficient methods for solving it.

In \citeasnoun{Luss2012}, model complexity of $l_2$ isotonic regression along the IRP path is quantified through the concept of {\em equivalent degrees of freedom} (DFs) as defined by \citeasnoun{Efron1986} and \citeasnoun{Hastie2001}. The initial iterations of IRP are shown to perform much more fitting than later iterations, and this phenomenon becomes more pronounced as the dimension $d$ increases. For example, in dimension $6$, often $50\%$ of DFs were fitted by the first IRP iteration. Although the model complexity and DF measures of \citeasnoun{Efron1986} do not generalize to non-$l_2$ loss as used in GIRP, the general spirit of this result should persist. Intuitively, because the space of isotonic splits of the entire covariate space probed in the first iteration is much larger than the space of possible isotonic cuts in further iterations, finding the optimal first split corresponds to a significant portion of all fitting.

These two effects ---  importance of early stopping coupled with the high portion of fitting in earlier iterations ---  are demonstrated
empirically in the experiments of the next section, where the best
performing solution along the GIRP path is compared to the optimal
solution of problem (\ref{eq:IC}) in terms of predictive
performance.

\section{Performance evaluation} \label{s:performance}
We here demonstrate usefulness of the partitioning algorithm for generalized loss functions.  The contribution of our generalization is specifically illustrated by the use of Huber loss, which proves to be very effective in the case of outliers.  We first exhibit the computational performance of GIRP and show that the algorithm can be applied to large-scale problems.  We then consider synthetic data sets that demonstrate the impact of regularization and conclude with an example on real data.

\subsection{Practical Computational Performance} \label{ss:comp_performance}
Solving the multivariate isotonic regression problem with general loss functions such as Huber's loss was previously a computationally difficult problem.  For certain loss functions, the isotonic regression problem can be reformulated and solved with off-the-shelf convex optimization solvers.  For example, isotonic regression with Huber's loss can be reformulated as a quadratic program by adding many variables to the optimization problem.  Simulations with 1000 training points were solved in 2.3 seconds with GIRP versus 135 seconds using Mosek \cite{Mosek} to solve the quadratic program (averaged over 50 simulations).  This simple experiment demonstrates that GIRP, which is specifically designed for isotonic regression problems, is clearly a much more practical tool than using off-the-shelf generic solvers and makes generalized isotonic regression problems amenable to large-scale problems.

Figure \ref{fig:timetest} (left) illustrates that GIRP can solve large-scale problems with Huber's loss.  The $i^{th}$ observation in each simulation is generated as $y_i=(\prod_j{x_{ij}})+\mathcal{N}(0,d^2)$ with $x_{ij}\sim\mathcal{U}[0,2]$, $d$ representing the dimension, and outliers randomly inserted. Results are averages over 50 simulations.  Isotonic regression in 8 dimensions with 20,000 training instances is solved in less than one minute.  Figure \ref{fig:timetest} (right) shows the number of partitions that GIRP performs on average for varying dimensions.  More training data and higher dimension typically implies more complex isotonic models, resulting in more partitioning problems and more computational time.  The computational limitation of training the isotonic model with GIRP is solving the partitioning problem.  \citeasnoun{Luss2010b} further offers a heuristic for solving the partition problem that makes training isotonic regression problems with up to 200,000 training instances easily feasible.

\begin{figure}[h!] \begin{center}
  \begin{tabular} {cc}
     \psfrag{title}[b]{\small{Time vs \# Training Points}}
     \psfrag{sec}[b]{\small{Time (seconds)}}
     \psfrag{num}[t]{\small{Number Training Points}}
     \includegraphics[width=.45 \textwidth]{./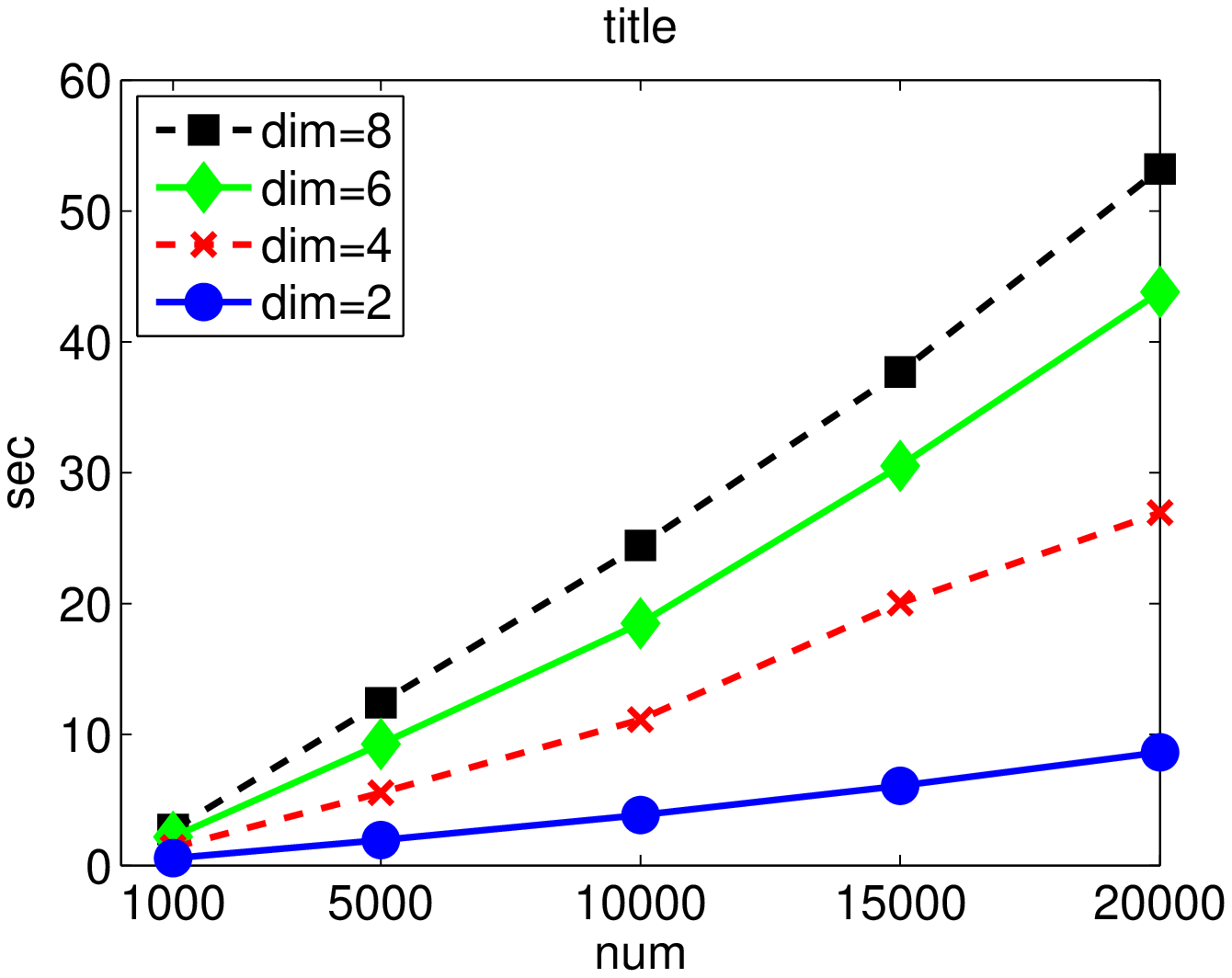} &
     \psfrag{title}[b]{\small{Number Partitions vs \# Training Points}}
     \psfrag{par}[b]{\small{Number of Partitions}}
     \psfrag{num}[t]{\small{Number Training Points}}
     \includegraphics[width=.45 \textwidth]{./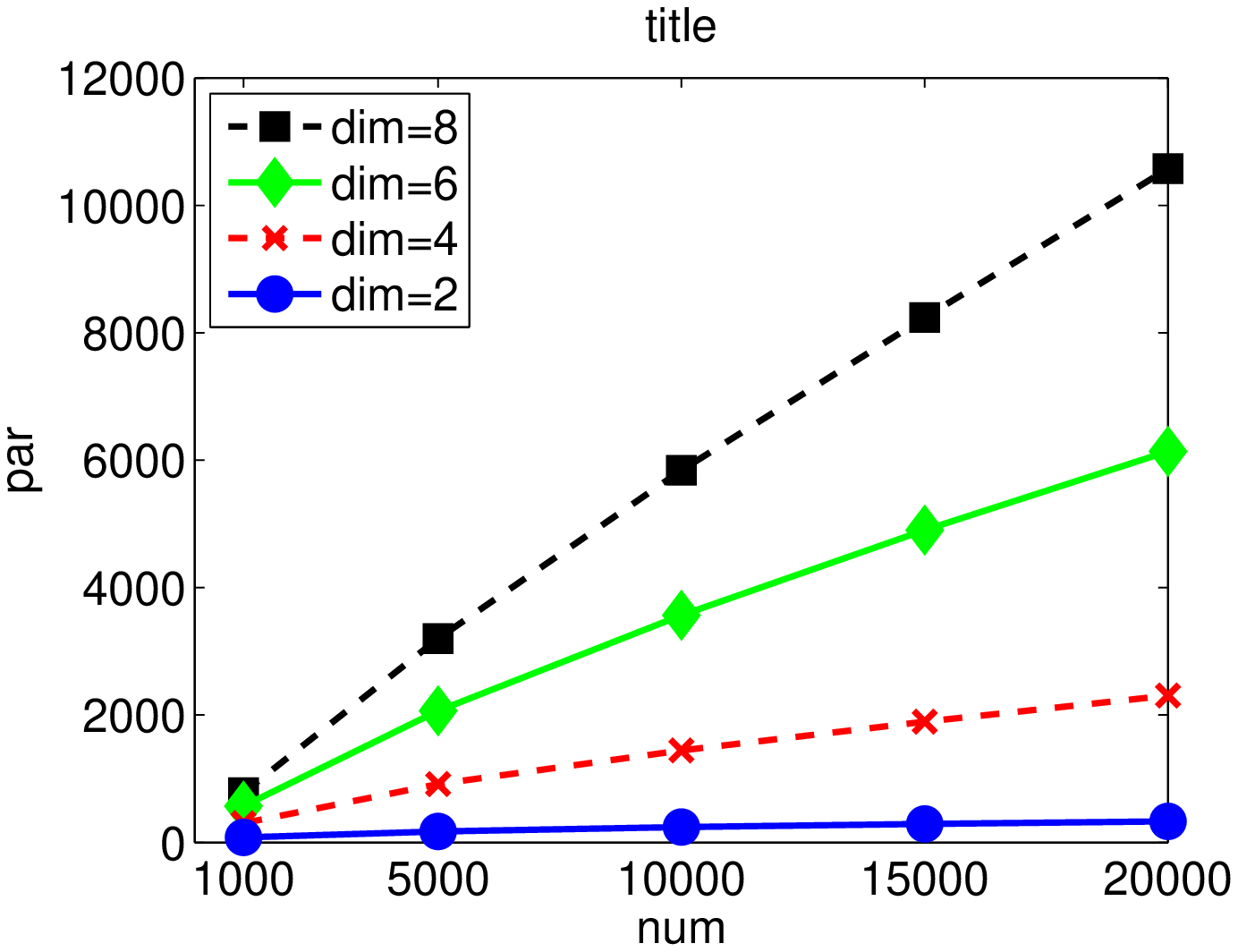}

     \end{tabular}
\caption{Left: Computational performance of training isotonic models with GIRP on a simulation with Huber's loss for varying number of dimensions and training instances. Right: Complexity of isotonic models as measured by the number of partitions for varying  number of dimensions and training instances.}
\label{fig:timetest}
\end{center} \end{figure}

\subsection{Simulations} \label{ss:perf_simulations}
Experiments are run on two different loss functions.  In the first experiment, count data is simulated from Poisson distributions where the average number of occurrences is generated by two different isotonic models. Generalized isotonic regression models for the Poisson rate are obtained by minimizing negative Poisson log-likelihood subject to isotonicity constraints.   In the second set of experiments, observations are generated by two different isotonic models and .5\% of the training observations are multiplied by a large constant to make them outliers.  Generalized isotonic regression models are obtained using $\delta$-Huber loss. Note that Poisson isotonic regressions can be handled using IRP due to the theory of \citeasnoun{Barl1972}, while Huber isotonic regressions require using GIRP.

Our experimental framework is as follows.   A training and testing set are independently simulated by a fixed distribution.  Training and testing sets have 15000 and 3000 observations, respectively.  A model is first generated on the training data.  In the case of GIRP, the training data is split into a subtraining set of 12000 observations and a validation set of the remaining 3000 observations.  A path of isotonic models is generated by running GIRP on the subtraining data.  The validation data is used to select the regularization level (stopping point), and the resulting model is applied to predict the testing data.  With respect to parametric regression, e.g., Poisson and Huber regressions, models are trained on the full 15000 observation training set and tested on the 3000 observation testing set.  Results are based on averaging fifty simulations.

The first two examples use Poisson negative log-likelihood as the loss function.  Data for the two simulations is generated as $x_{ij}\sim\mathcal{U}[0,10]$ and $x_{ij}\sim\mathcal{U}[5,10]$ (the coordinates of $x$ are drawn i.i.d in all our experiments), respectively.  The $i^{th}$ observation in each simulation is generated as $y_i\sim\mbox{Poisson}(\prod_j{\sqrt{x_{ij}}})$ and $y_i\sim\mbox{Poisson}(\sum_j{x^2_{ij}})$, respectively.  The isotonic models are compared to the results of a Poisson regression, and performance here is measured by negative Poisson log-likelihood.  The regularized model generated by the minimum loss along the GIRP curve (GIRP Min Poisson) is compared with the final GIRP model (GIRP Final Poisson) and with the Poisson regression model.  In practice, one would only consider predictions using the regularized model, but here we want to compare against the unregularized model as well.  Table \ref{table:simulation_poisson_stats} demonstrates that Poisson isotonic regression works well with a reasonable number of variables (2-5 for the first simulation and 2-3 for the second simulation), however is outperformed by the simple Poisson regression with more than 5 variables.  In comparing the regularized GIRP model with the final GIRP model, there is no statistical difference in this example.  The next simulation clearly exemplifies the effect of regularization, in addition to the use of generalized isotonic regression.

The second two examples use $\delta$-Huber loss as the loss function for generating models.  Data for the two simulations is generated as $x_{ij}\sim\mathcal{U}[0,3]$ and $x_{ij}\sim\mathcal{U}[0,5]$, respectively.  The $i^{th}$ observation in each simulation is generated as $y_i=(\prod_j{x_{ij}})+\mathcal{N}(0,d^2)$ and $y_i=(\sum_j{x^2_{ij}})+\mathcal{N}(0,(1.5d)^2)$, respectively, where $d$ is the dimension.  For a randomly chosen $0.5\%$ of the training data, the observations are multiplied by a factor of 20.  The generalized isotonic models are compared to the results of a Huber regression, and performance here is measured by mean squared error.  Note that we assume that squared error loss represents the true objective performance; the models are fit using Huber loss in order to avoid sensitivity to outliers. The regularized model generated by the minimum loss along the GIRP curve (GIRP Min Huber) is compared with the final GIRP model (GIRP Final Huber) and with the Huber regression model.  Table \ref{table:simulation_huber_stats} demonstrates that Huber isotonic regression works well with a reasonable number of variables (2-5 for the first simulation and 2-4 for the second simulation), however, again, a simple Huber regression outperforms GIRP for higher dimensions due to overfitting.  An important note here is the effect of regularization.  The average loss using the unregularized isotonic model is not statistically superior at any dimension to the average loss using a Huber regression while the regularized isotonic model produces statistically improved performance.

Figures \ref{fig:simulations_poisson} and \ref{fig:simulations_huber} display regularization paths for the Poisson and Huber simulations, respectively.  Each curve shows the performance from using increasingly complex models generated by GIRP.  Take, for example, the first curve ($d=2$) under Model 1 in Figure \ref{fig:simulations_poisson}.  The x-axis states the number of partitions in the particular GIRP model and the y-axis measures the negative Poisson log-likelihood of using the GIRP models (trained on the subtraining data) to predict the validation data.  As the number of partitions increases (i.e., as the model becomes more complex), performance improves.  Consider next $d=5$ under the same model.  After 12 iterations of GIRP the performance begins to worsen (the minimum along each curve is shown by a diamond).  This is exactly the effect of regularization.  Performance improves as the model complexity increases up to a certain point at which increasing the complexity further overfits the model and performance declines.  Thus, as done to obtain the performance in Tables \ref{table:simulation_poisson_stats} and \ref{table:simulation_huber_stats}, the model along the path that gives the best performance on the validation data is used to make predictions on the independent testing data.

The curves in Figure \ref{fig:simulations_huber} show similar paths for the generalized isotonic regressions with Huber loss where  performance is measured by mean squared error.  Here the effects of regularization are much more pronounced than they are in the Poisson simulations.  This suggests that robust regressions on applications where isotonicity is desired would greatly benefit from the regularization of GIRP with Huber loss.  We next exhibit this robustness effect on a data set for predicting the miles-per-gallon of automobiles.

\begin{table}[h!]
\begin{center}
\small{
\begin{tabular}{|c|c|c|c|c|c|}
\multicolumn{6}{c}{Model 1: $y_i\sim\mbox{Poisson}(\prod_j{\sqrt{x_{ij}}})$ with $x_{ij}\sim\mathcal{U}[0,10]$} \\ \hline
   Dim& GIRP Min Poisson& GIRP Final Poisson&Poisson Regression &  Min & GIRP \\
   & Neg. Log-Likelihood & Neg. Log-Likelihood & Neg. Log-Likelihood & Path & Length \\
  \hline
2&\textbf{30678.65 ($\pm$ 23.83)}&\textbf{30678.73 ($\pm$ 23.84)}&32031.44 ($\pm$ 24.22)&217&298\\ \hline
3&\textbf{36395.90 ($\pm$ 38.03)}&\textbf{36397.99 ($\pm$ 37.84)}&40421.91 ($\pm$ 37.30)&90&618\\ \hline
4&43908.67 ($\pm$ 54.93)&43944.68 ($\pm$ 56.47)&54108.80 ($\pm$ 60.77)&78&584\\ \hline
5&\textbf{66812.67 ($\pm$ 240.06)}&\textbf{68096.06 ($\pm$ 347.41)}&81096.57 ($\pm$ 132.33)&30&371\\ \hline
6&200068.80 ($\pm$ 1072.14)&220308.20 ($\pm$ 2059.38)&140478.56 ($\pm$ 398.51)&9&479\\ \hline
 \multicolumn{6}{c}{} \\
\multicolumn{6}{c}{Model 2: $y_i\sim\mbox{Poisson}(\sum_j{x^2_{ij}})$ with $x_{ij}\sim\mathcal{U}[5,10]$} \\ \hline
   Dim& GIRP Min Poisson& GIRP Final Poisson&Poisson Regression &  Min & GIRP \\
   & Neg. Log-Likelihood & Neg. Log-Likelihood & Neg. Log-Likelihood & Path & Length \\
  \hline
2&\textbf{56957.75 ($\pm$ 31.19)}&\textbf{56957.77 ($\pm$ 31.20)}&57802.85 ($\pm$ 31.17)&661&794\\ \hline
3&\textbf{60650.13 ($\pm$ 30.83)}&\textbf{60650.62 ($\pm$ 31.32)}&60861.90 ($\pm$ 29.95)&105&1239\\ \hline
4&64008.55 ($\pm$ 40.17)&64041.56 ($\pm$ 44.10)&62956.84 ($\pm$ 23.68)&57&1105\\ \hline
5&67837.30 ($\pm$ 50.78)&68182.67 ($\pm$ 78.18)&64590.51 ($\pm$ 30.33)&16&806\\ \hline
6&74438.33 ($\pm$ 97.72)&75479.73 ($\pm$ 91.07)&65936.54 ($\pm$ 28.06)&16&544\\ \hline
 \end{tabular} }
\end{center}
\caption{Statistics for count data simulations generated by two different models as labeled above.  {\em GIRP Min (Final) Poisson Neg. Log-Likelihood} (LL)
refers to the negative Poisson LL of predicting on independent testing data using the model that produced the minimum (final) loss along a regularization path generated on training data.  {\em Min Path} is
the number of partitions made to generate the minimum negative Poisson LL and GIRP
Path Length is the number of partitions in the global generalized isotonic
solution. {\em Poisson Regression Neg. LL} is the negative Poisson LL from using Poisson
regressions. Bolded MSE values for minimum and final GIRP negative Poisson LL
indicate that they are significantly lower than the negative LL of the Poisson regression at level $.05$.} \label{table:simulation_poisson_stats}
\end{table}

\begin{table}[h!]
\begin{center}
\small{
\begin{tabular}{|c|c|c|c|c|c|}
\multicolumn{6}{c}{Model 1: $y_i=(\prod_j{x_{ij}})+\mathcal{N}(0,d^2)$ with $x_{ij}\sim\mathcal{U}[0,3]$} \\ \hline
   Dim& GIRP Min Huber& GIRP Final Huber&Huber Regression &  Min & GIRP \\
   & MSE & MSE & MSE & Path & Length \\
  \hline
2&4.21 ($\pm$ 0.30)&4.35 ($\pm$ 0.36)&4.55 ($\pm$ 0.03)&49&421\\ \hline
3&\textbf{9.69 ($\pm$ 0.07)}&\textbf11.71 ($\pm$ 2.44)&13.18 ($\pm$ 0.10)&27&1607\\ \hline
4&\textbf{22.93 ($\pm$ 0.26)}&90.83 ($\pm$ 64.78)&36.94 ($\pm$ 0.51)&10&3645\\ \hline
5&\textbf{83.20 ($\pm$ 1.23)}&280.08 ($\pm$ 106.02)&115.47 ($\pm$ 2.43)&6&5783\\ \hline
6&370.56 ($\pm$ 10.41)&2080.71 ($\pm$ 915.59)&391.01 ($\pm$ 12.09)&3&7531\\ \hline
 \multicolumn{6}{c}{} \\
\multicolumn{6}{c}{Model 2: $y_i=(\sum_j{x^2_{ij}})+\mathcal{N}(0,(1.5d)^2)$ with $x_{ij}\sim\mathcal{U}[0,5]$} \\ \hline
   Dim&Huber Min Huber& GIRP Final Huber&Huber Regression &  Min & GIRP \\
   & MSE & MSE & MSE & Path & Length \\
  \hline
2&\textbf{9.60 ($\pm$ 0.07)}&14.12 ($\pm$ 8.35)&15.98 ($\pm$ 0.11)&57&1154\\ \hline
3&\textbf{23.79 ($\pm$ 0.20)}&60.17 ($\pm$ 34.96)&30.61 ($\pm$ 0.23)&33&3283\\ \hline
4&\textbf{48.20 ($\pm$ 0.41)}&193.37 ($\pm$ 83.58)&50.02 ($\pm$ 0.36)&16&5705\\ \hline
5&85.62 ($\pm$ 0.58)&599.59 ($\pm$ 298.06)&73.44 ($\pm$ 0.54)&8&7785\\ \hline
6&145.06 ($\pm$ 1.34)&1602.43 ($\pm$ 620.45)&101.12 ($\pm$ 0.73)&8&9283\\ \hline
  \end{tabular} }
\end{center}
\caption{Statistics for count data simulations generated by two different models as labeled above.  {\em GIRP Min (Final) Huber MSE} refers to the MSE of predicting on independent testing data using the model that produced the minimum (final) loss along a regularization path generated on training data.  {\em Min Path} is
the number of partitions made to generate the minimum MSE and GIRP
Path Length is the number of partitions in the global generalized isotonic
solution. {\em Huber Regression MSE} is the MSE from using Huber
regressions.} \label{table:simulation_huber_stats}
\end{table}

\begin{figure} [h!]
\begin{tabular*}{\textwidth}{@{\extracolsep{\fill}}cc}
\textbf{Model 1}& \textbf{Model 2}\\
\psfrag{m}[b][t]{\scriptsize{N-Poiss Loss}}
\psfrag{d}[b]{\scriptsize{$d=2$}}
\includegraphics[width=.5\textwidth]{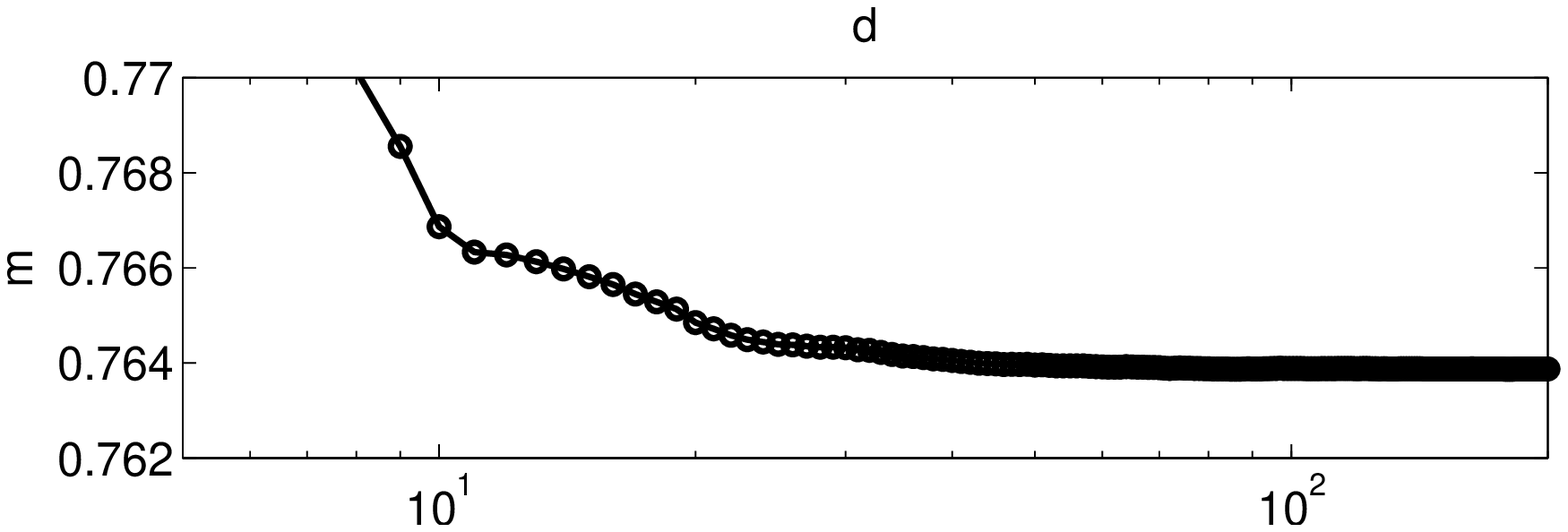}
& \psfrag{m}[b][t]{\scriptsize{N-Poiss Loss}}
\psfrag{d}[b]{\scriptsize{$d=2$}}
\includegraphics[width=.5\textwidth]{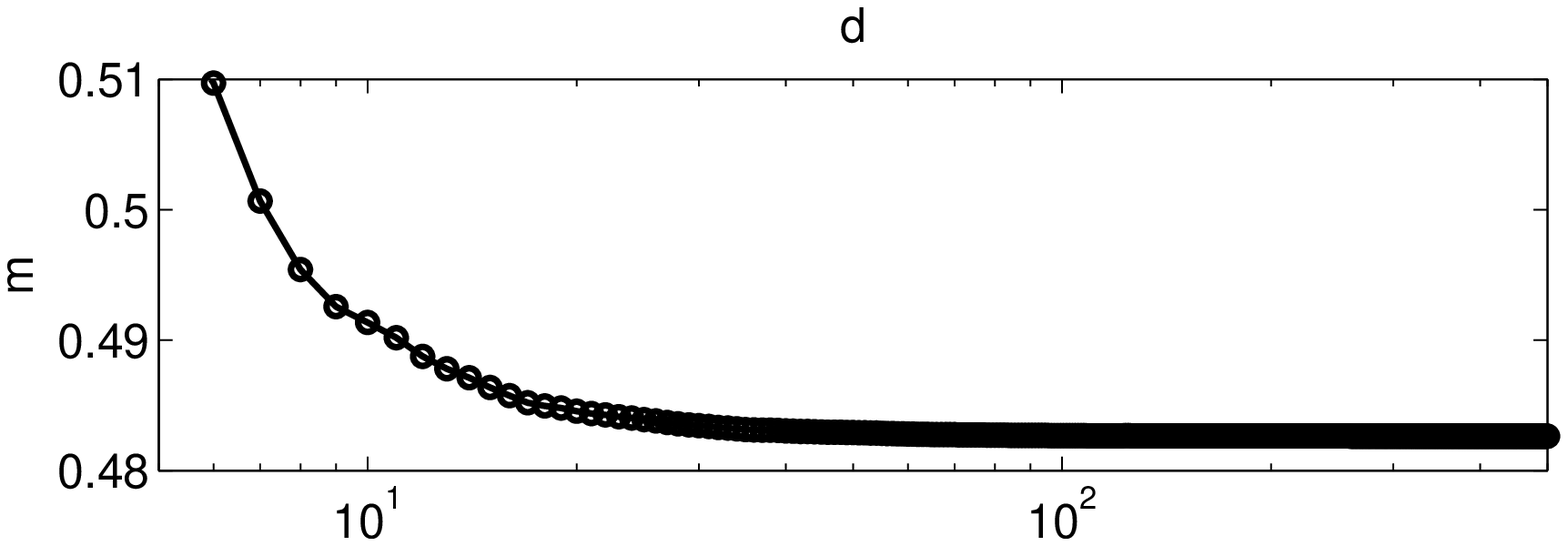}
\\\\
\psfrag{m}[b][t]{\scriptsize{N-Poiss Loss}}
\psfrag{d}[b]{\scriptsize{$d=3$}}
\includegraphics[width=.5\textwidth]{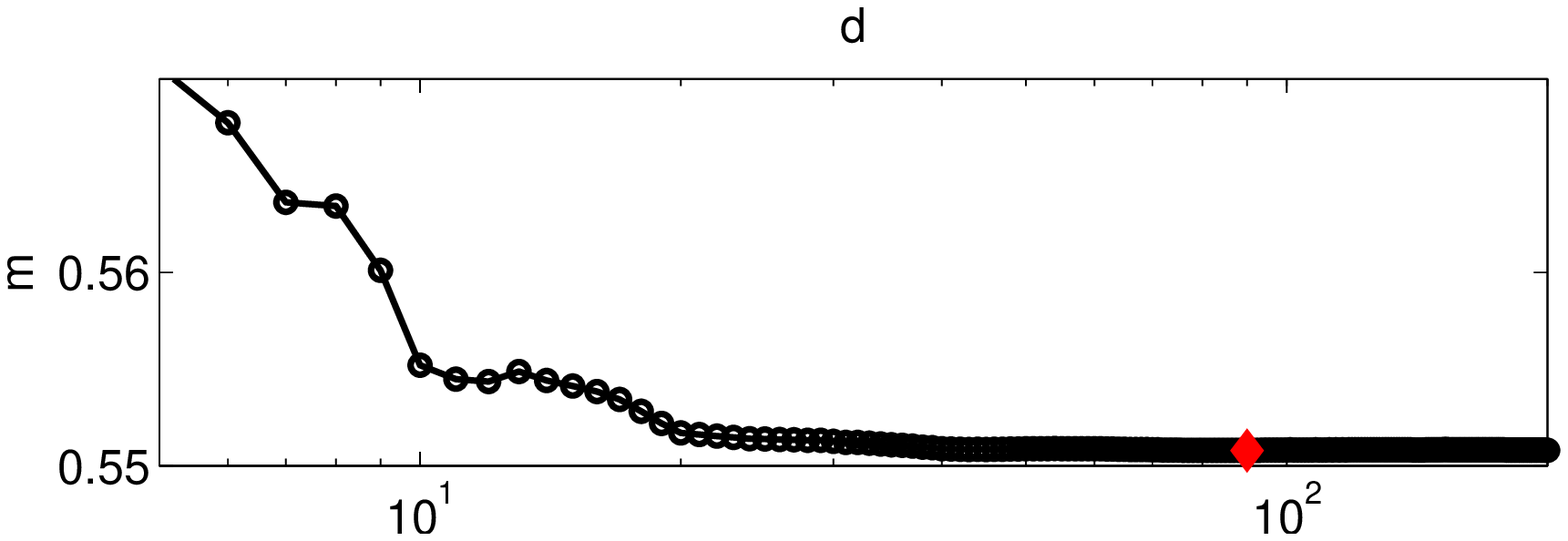}
& \psfrag{m}[b][t]{\scriptsize{N-Poiss Loss}}
\psfrag{d}[b]{\scriptsize{$d=3$}}
\includegraphics[width=.5\textwidth]{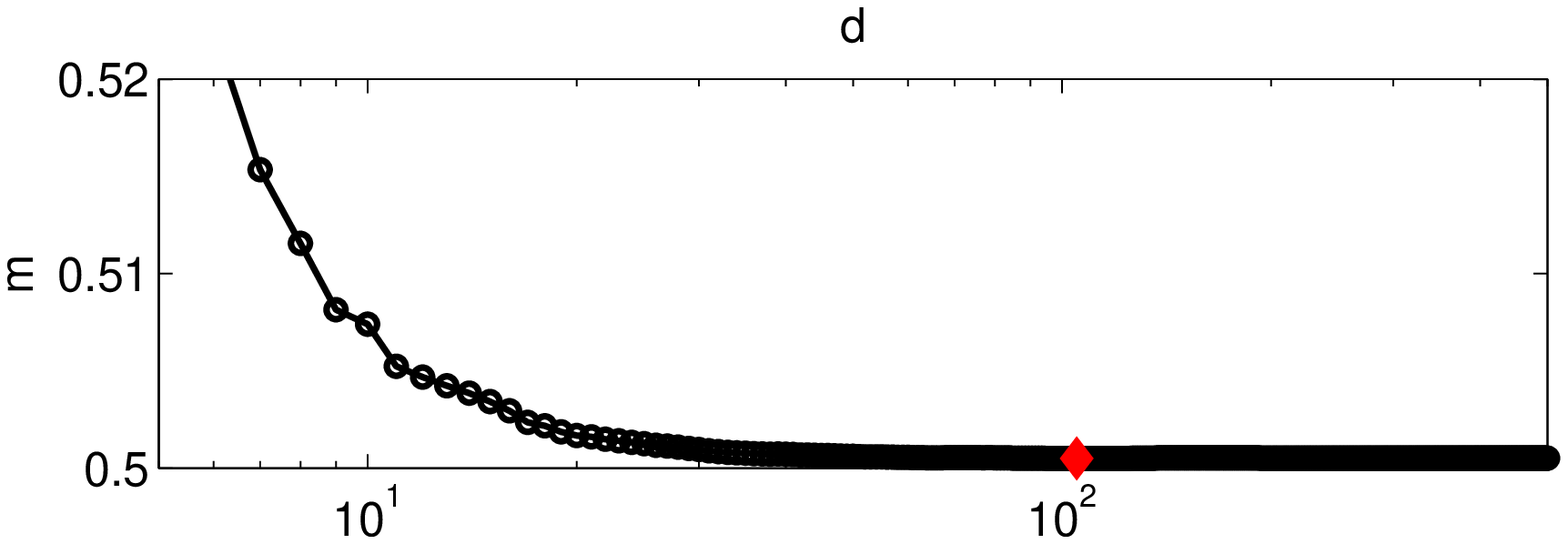}
\\\\
\psfrag{m}[b][t]{\scriptsize{N-Poiss Loss}}
\psfrag{d}[b]{\scriptsize{$d=4$}}
\includegraphics[width=.5\textwidth]{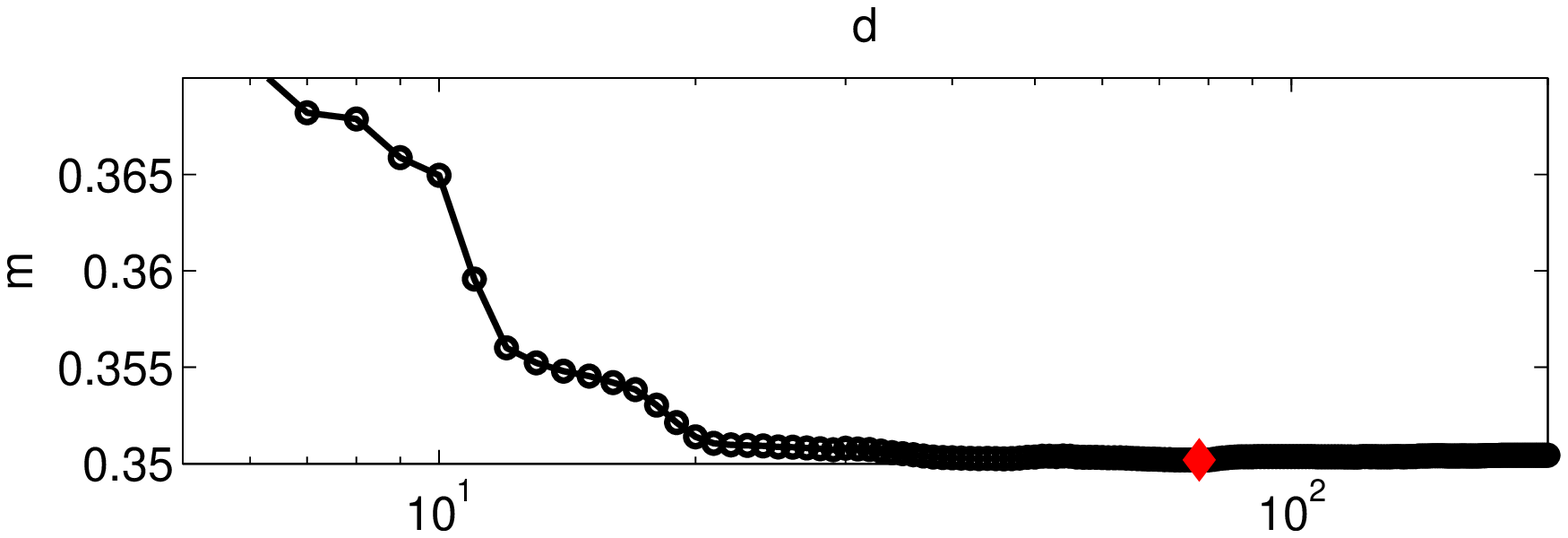}
& \psfrag{m}[b][t]{\scriptsize{N-Poiss Loss}}
\psfrag{d}[b]{\scriptsize{$d=4$}}
\includegraphics[width=.5\textwidth]{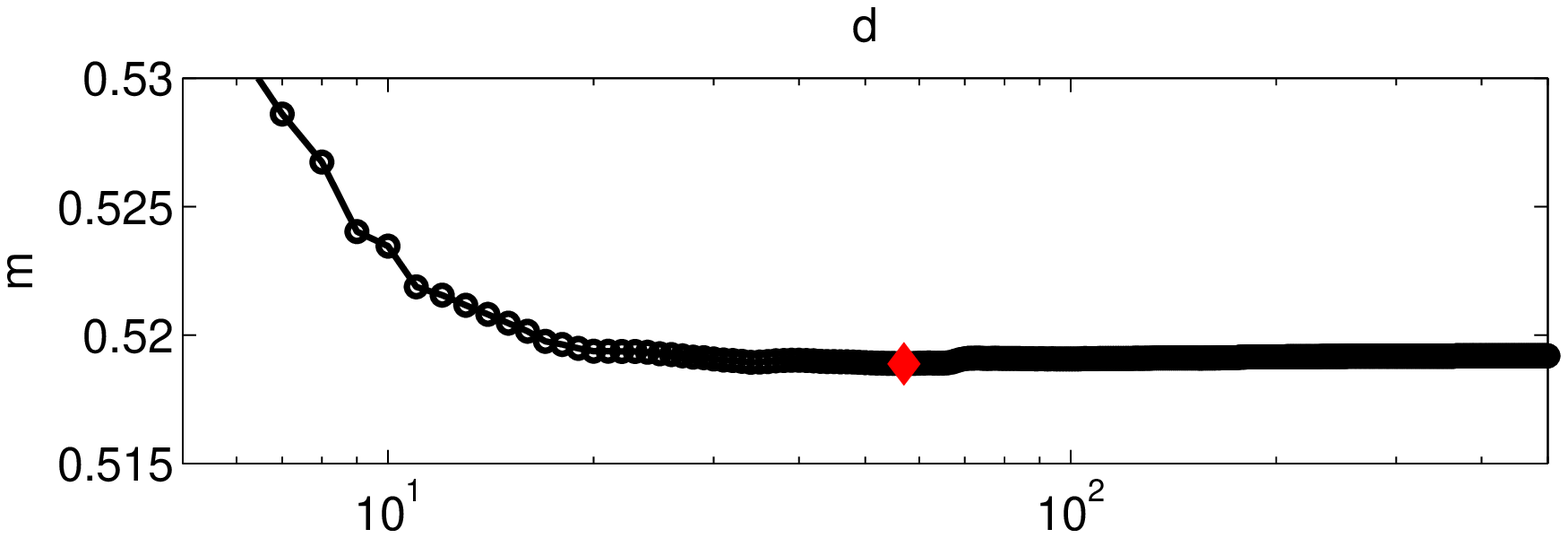}
\\\\
\psfrag{m}[b][t]{\scriptsize{N-Poiss Loss}}
\psfrag{d}[b]{\scriptsize{$d=5$}}
\includegraphics[width=.5\textwidth]{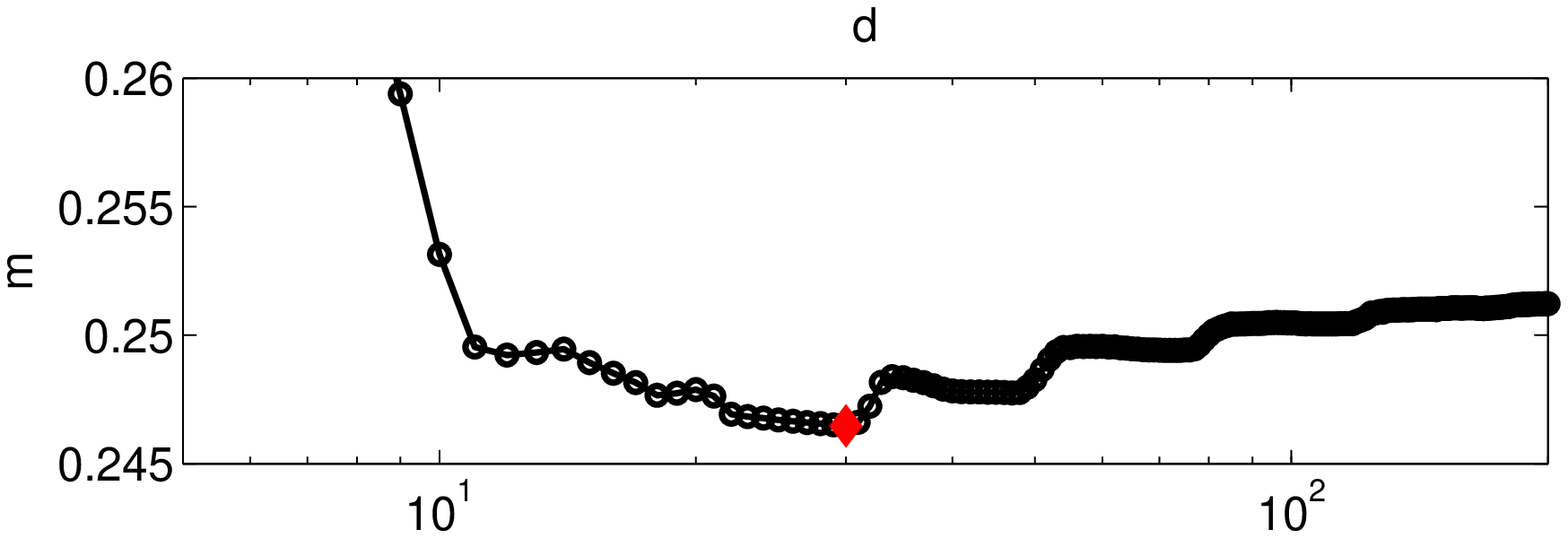}
& \psfrag{m}[b][t]{\scriptsize{N-Poiss Loss}}
\psfrag{d}[b]{\scriptsize{$d=5$}}
\includegraphics[width=.5\textwidth]{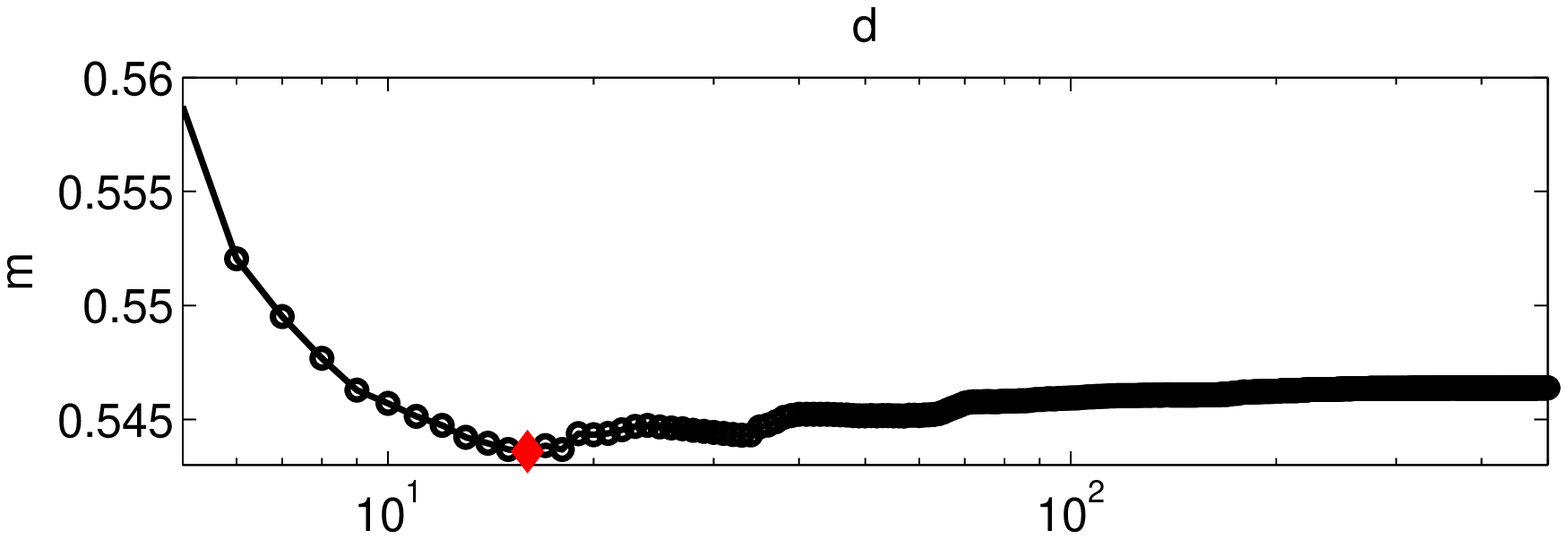}
\\\\
\psfrag{m}[b][t]{\scriptsize{N-Poiss Loss}}
\psfrag{d}[b]{\scriptsize{$d=6$}}
\includegraphics[width=.5\textwidth]{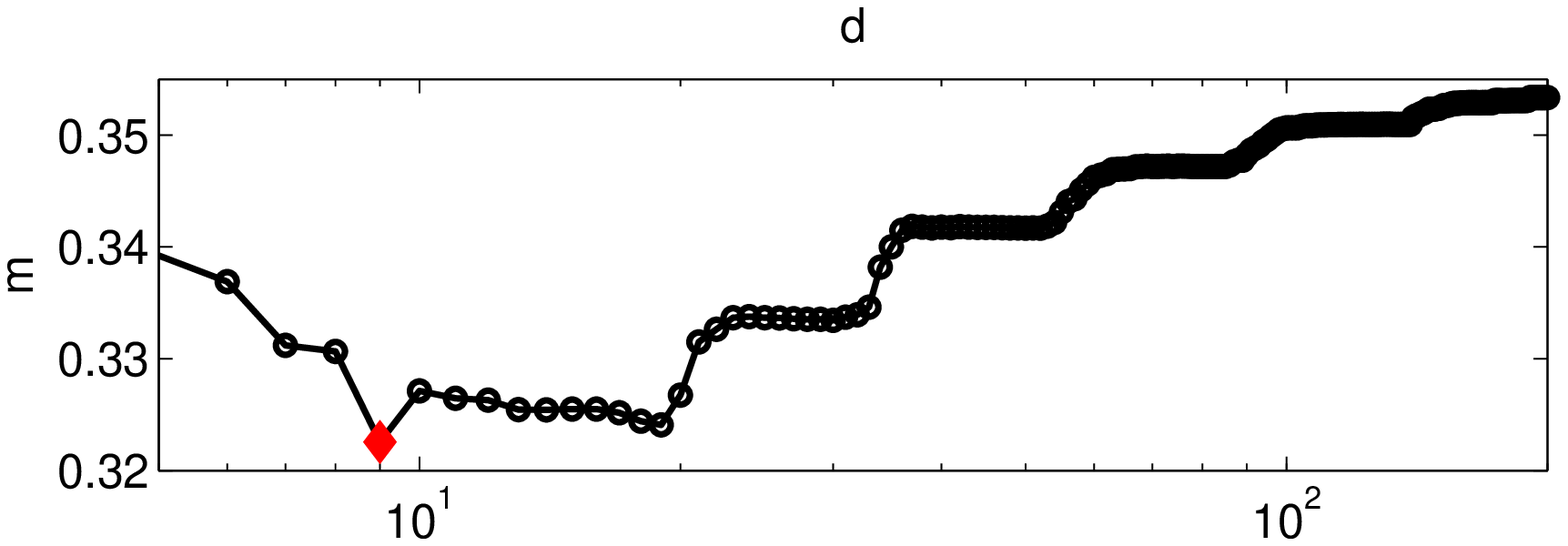}
& \psfrag{m}[b][t]{\scriptsize{N-Poiss Loss}}
\psfrag{d}[b]{\scriptsize{$d=6$}}
\includegraphics[width=.5\textwidth]{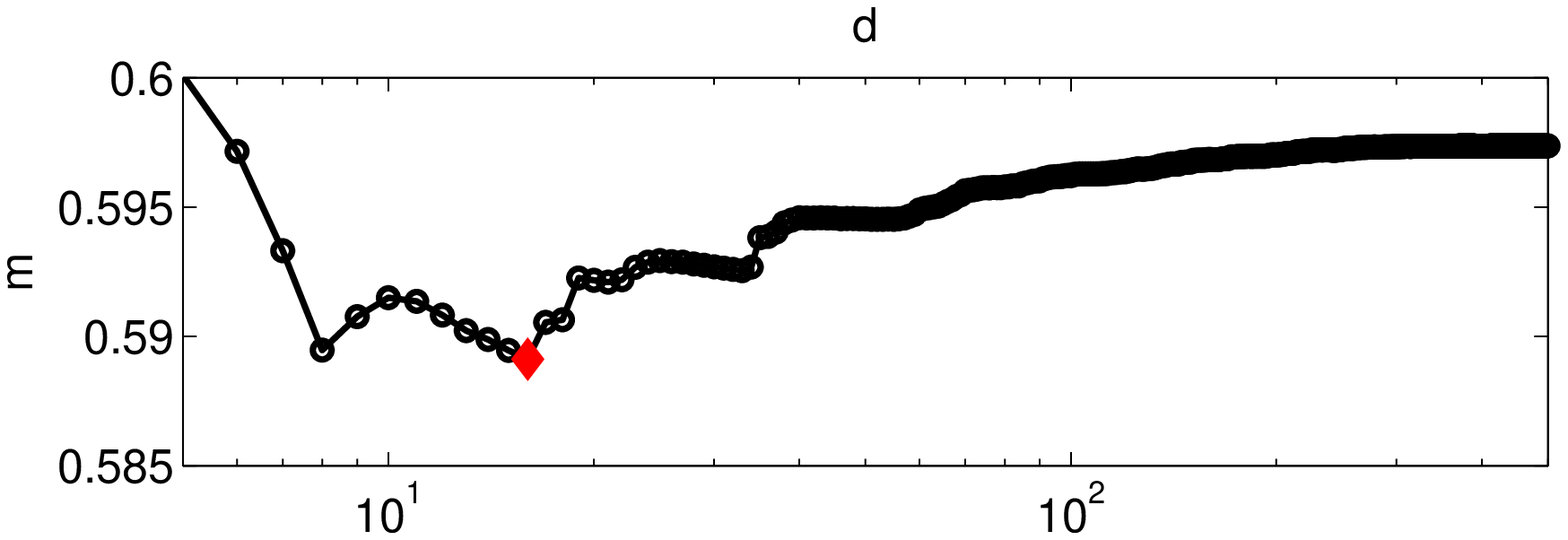}
 \end{tabular*}
\caption{Normalized negative Poisson log-likelihood (LL) for predictions of count data simulations with different dimensions $d$.  Each path is normalized by the loss of the initial model.  The x-axis in each figure corresponds to the number of partitions made by GIRP, i.e., the curves show how the
normalized negative Poisson LL of test data varies as the GIRP algorithm progresses. Model 1 uses the function $y_i\sim\mbox{Poisson}(\prod_j{\sqrt{x_{ij}}})$ with $x_{ij}\sim\mathcal{U}[0,10]$ and Model 2 uses the function $y_i\sim\mbox{Poisson}(\sum_j{x^2_{ij}})$ with $x_{ij}\sim\mathcal{U}[0,5]$.  Fifty simulations were run with 12000 training and 3000 testing points.  Only the first few hundred partitions of the paths are displayed in order to make the loss of the earlier GIRP
iterations visually clearer.  Scales also differ in order to make
the shapes of the curves clear.} \label{fig:simulations_poisson}
\end{figure}

\begin{figure} [h!]
\begin{tabular*}{\textwidth}{@{\extracolsep{\fill}}cc}
\textbf{Model 1}& \textbf{Model 2}\\
\psfrag{m}[b][t]{\scriptsize{N-MSE}}
\psfrag{d}[b]{\scriptsize{$d=2$}}
\includegraphics[width=.5\textwidth]{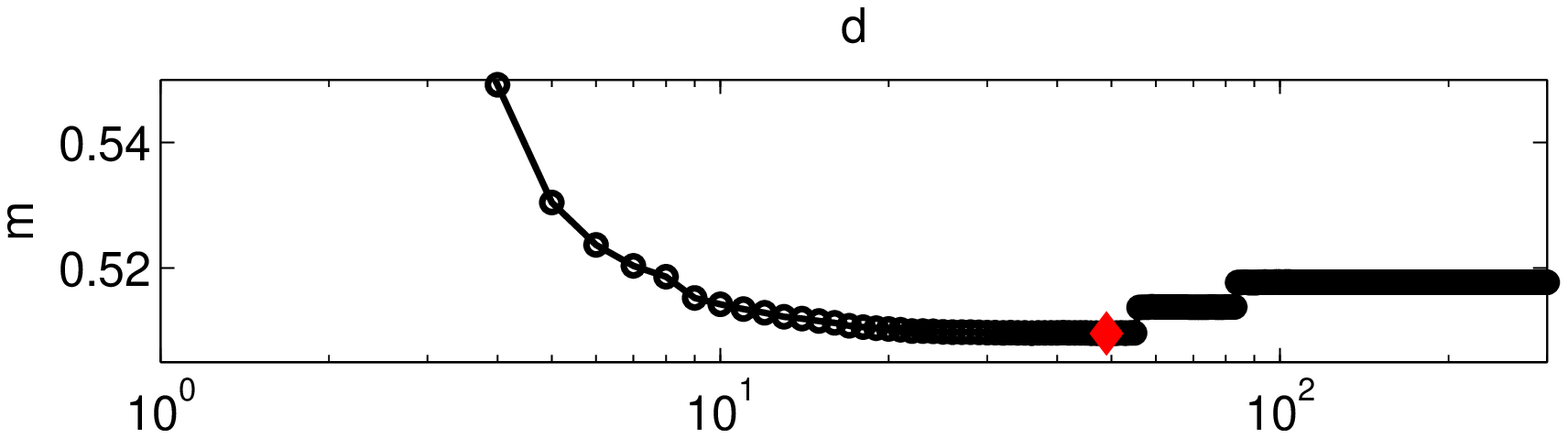}
& \psfrag{m}[b][t]{\scriptsize{N-MSE}}
\psfrag{d}[b]{\scriptsize{$d=2$}}
\includegraphics[width=.5\textwidth]{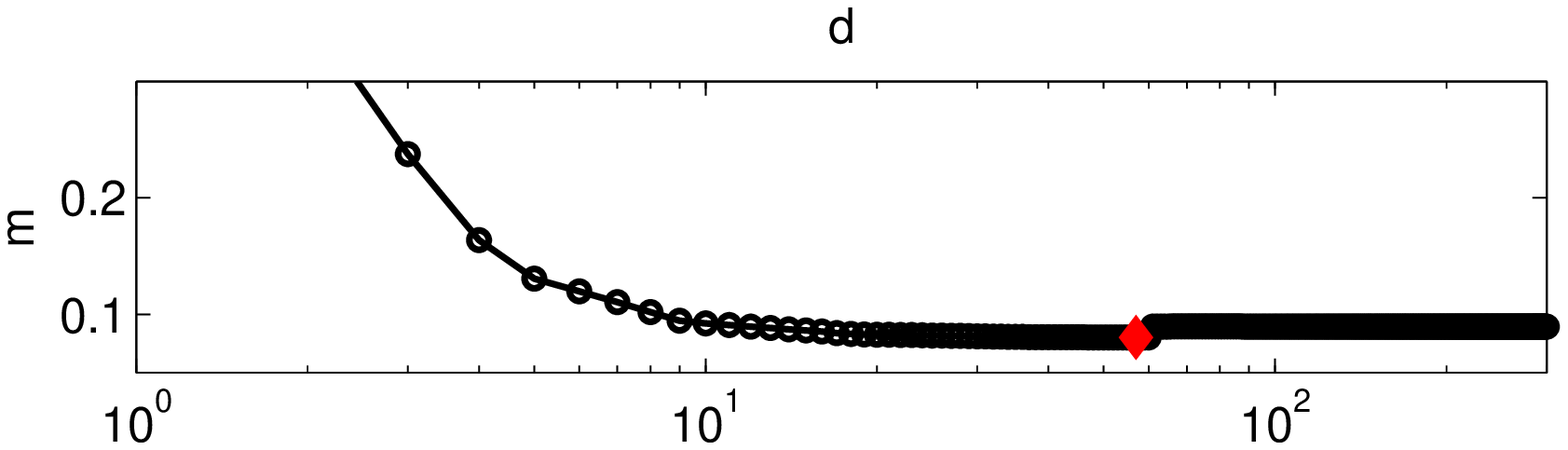}
\\\\
\psfrag{m}[b][t]{\scriptsize{N-MSE}}
\psfrag{d}[b]{\scriptsize{$d=3$}}
\includegraphics[width=.5\textwidth]{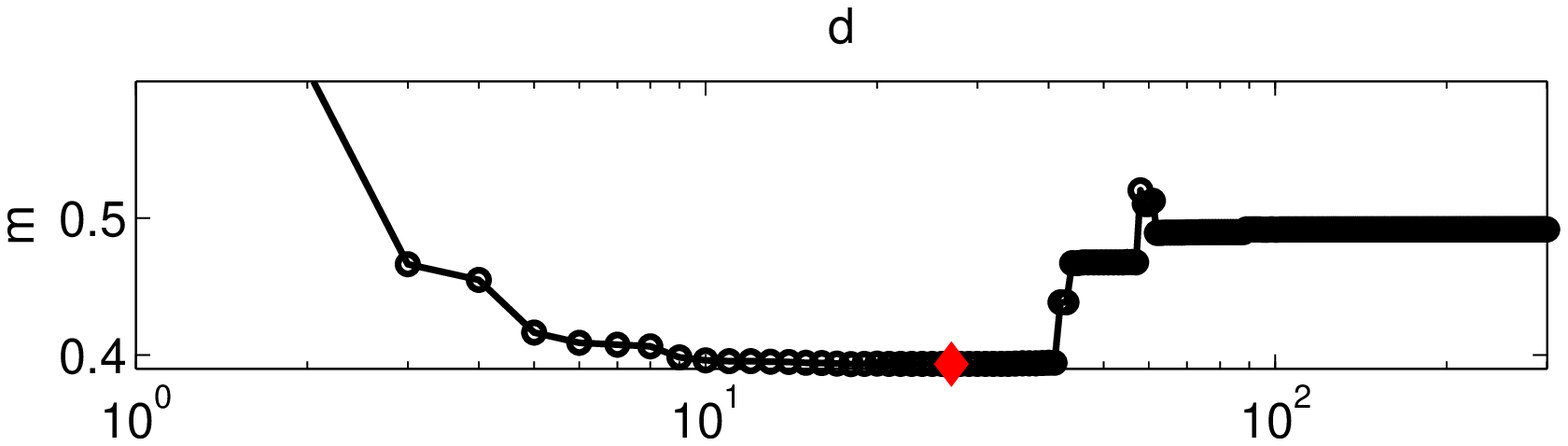}
& \psfrag{m}[b][t]{\scriptsize{N-MSE}}
\psfrag{d}[b]{\scriptsize{$d=3$}}
\includegraphics[width=.5\textwidth]{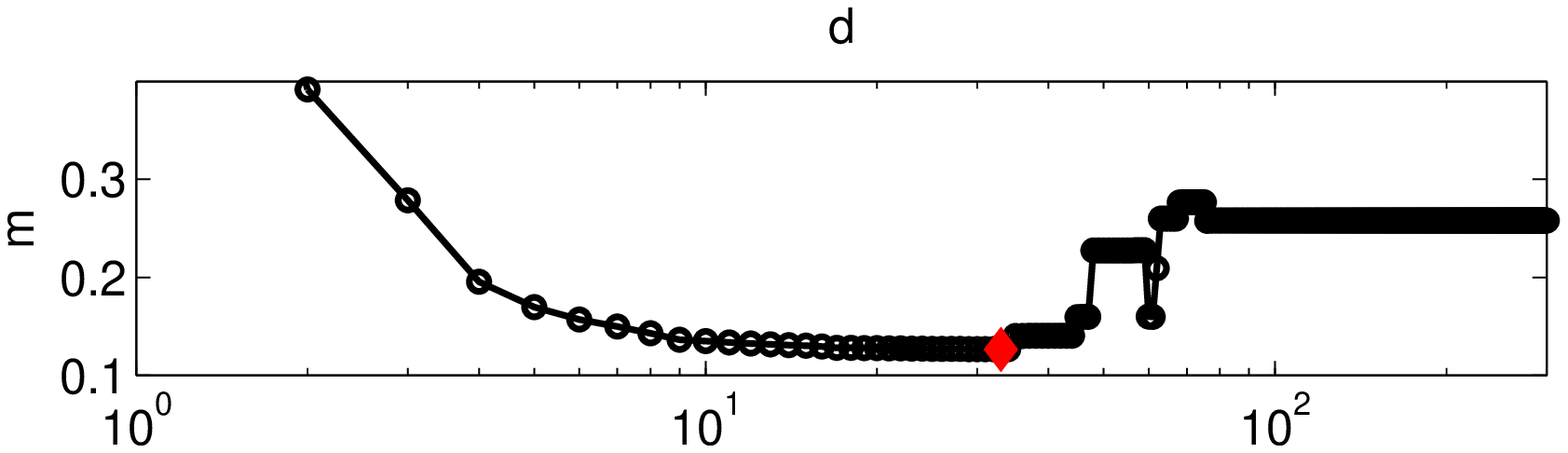}
\\\\
\psfrag{m}[b][t]{\scriptsize{N-MSE}}
\psfrag{d}[b]{\scriptsize{$d=4$}}
\includegraphics[width=.5\textwidth]{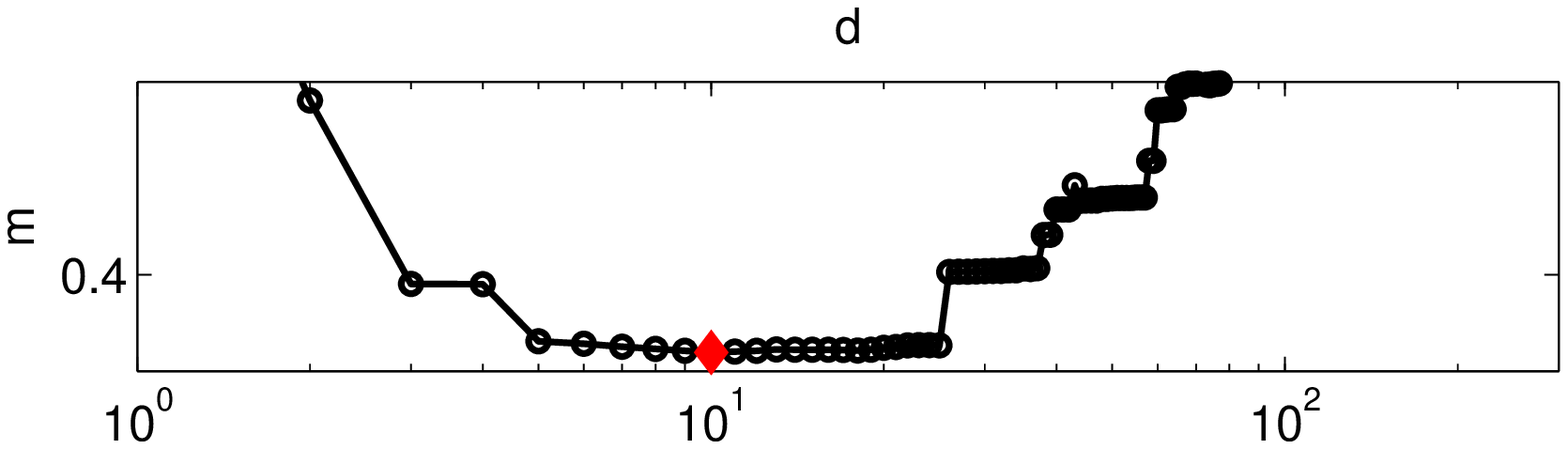}
& \psfrag{m}[b][t]{\scriptsize{N-MSE}}
\psfrag{d}[b]{\scriptsize{$d=4$}}
\includegraphics[width=.5\textwidth]{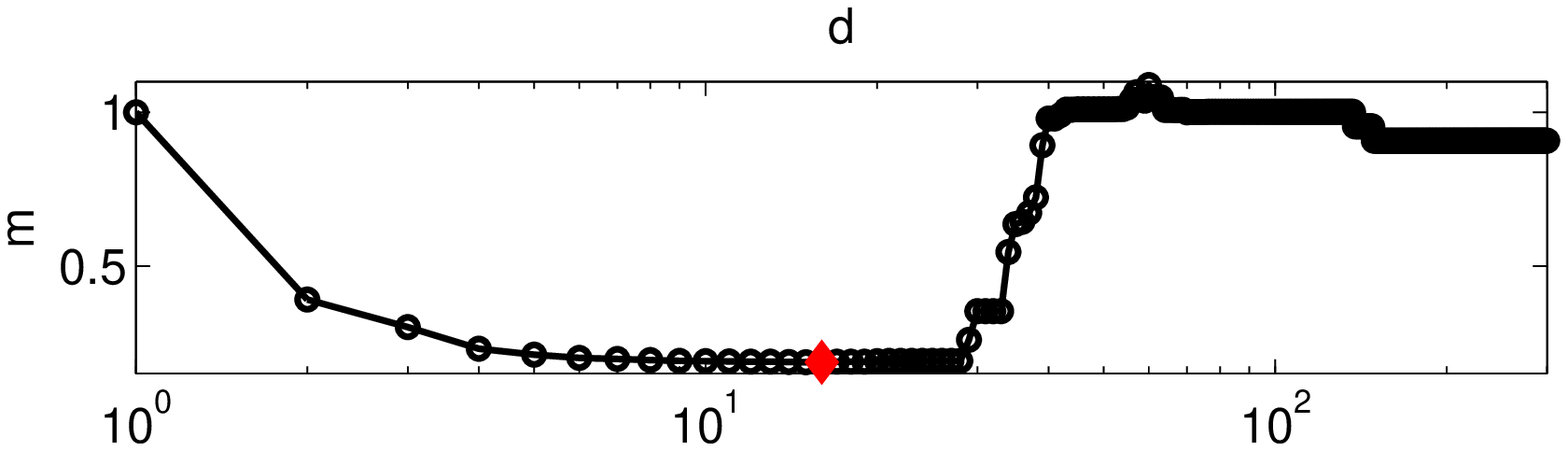}
\\\\
\psfrag{m}[b][t]{\scriptsize{N-MSE}}
\psfrag{d}[b]{\scriptsize{$d=5$}}
\includegraphics[width=.5\textwidth]{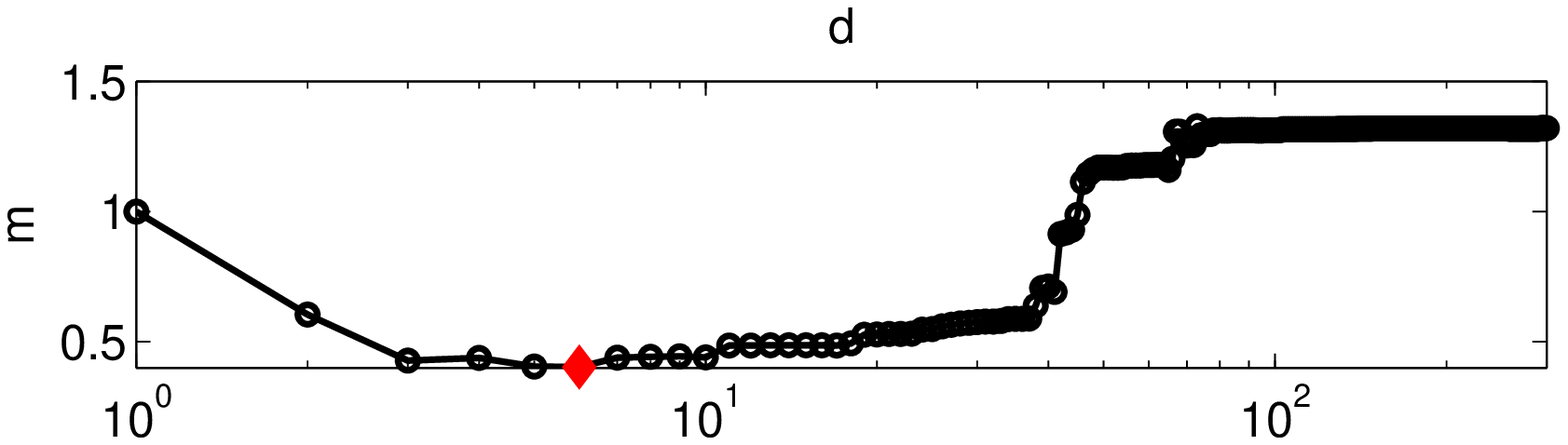}
& \psfrag{m}[b][t]{\scriptsize{N-MSE}}
\psfrag{d}[b]{\scriptsize{$d=5$}}
\includegraphics[width=.5\textwidth]{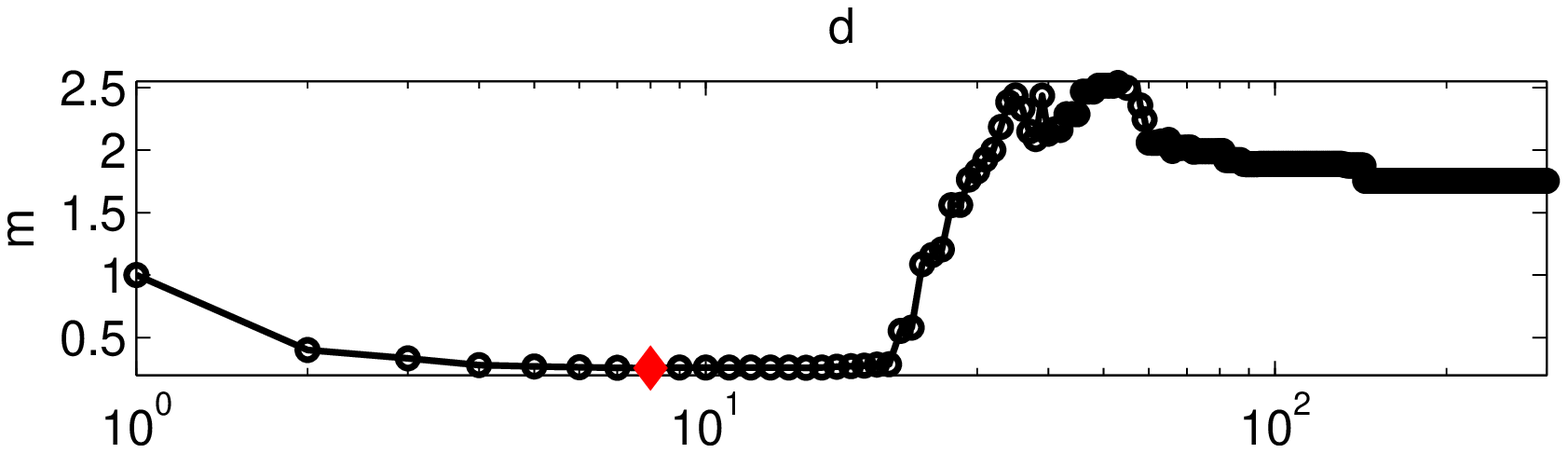}
\\\\
\psfrag{m}[b][t]{\scriptsize{N-MSE}}
\psfrag{d}[b]{\scriptsize{$d=6$}}
\includegraphics[width=.5\textwidth]{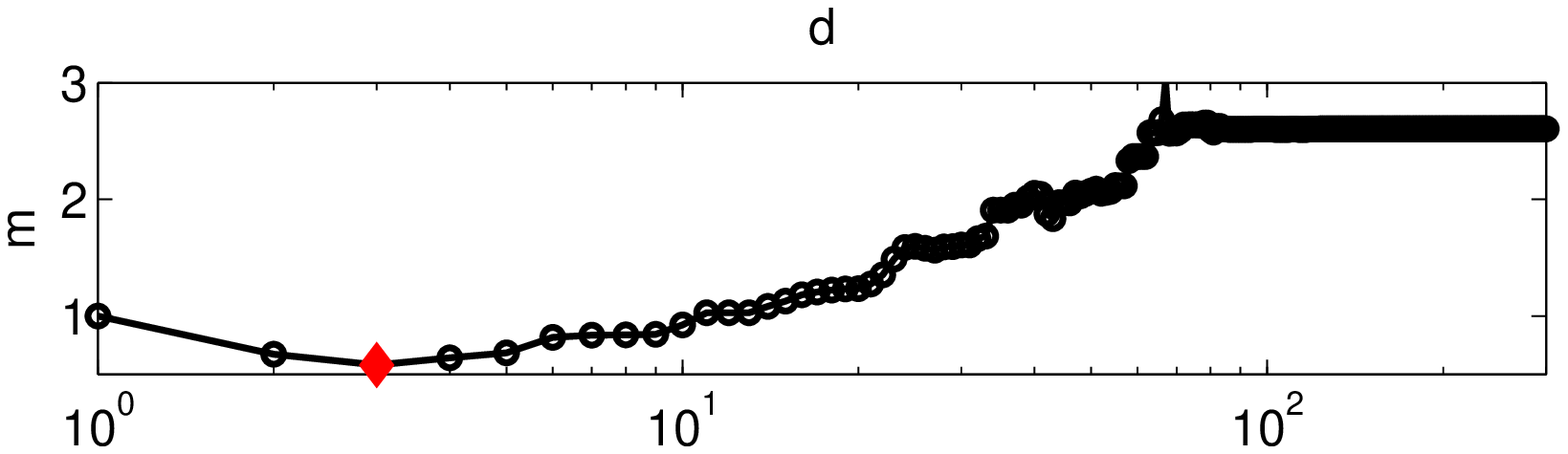}
& \psfrag{m}[b][t]{\scriptsize{N-MSE}}
\psfrag{d}[b]{\scriptsize{$d=6$}}
\includegraphics[width=.5\textwidth]{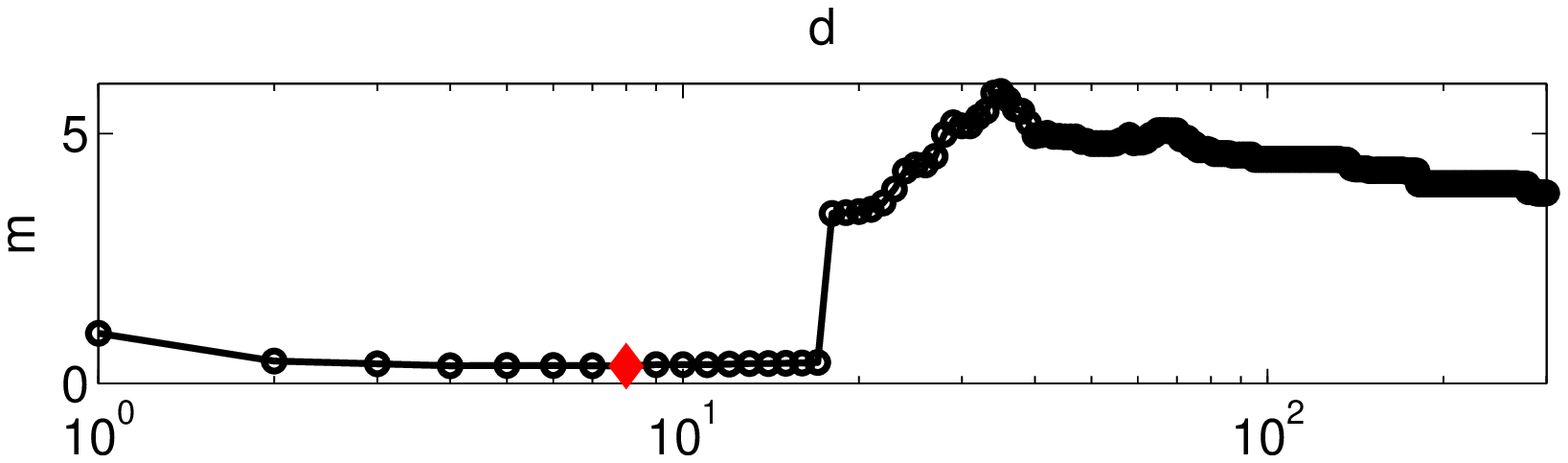}
 \end{tabular*}
\caption{Normalized MSE for predictions
of two isotonic simulations with different dimensions $d$.  Each path is normalized by the MSE of the initial model.  The x-axis in each figure corresponds to the
number of partitions made by GIRP, i.e., the curves show how the
normalized MSE of test data varies as the GIRP algorithm progresses.
Model 1 uses the function $y_i=(\prod_j{x_{ij}})+\mathcal{N}(0,d^2)$ with $x_{ij}\sim\mathcal{U}[0,10]$ and Model 2 uses the function $y_i=(\sum_j{x^2_{ij}})+\mathcal{N}(0,(1.5d)^2)$ with $x_{ij}\sim\mathcal{U}[0,5]$.  Fifty simulations were run with 12000 training and 3000 testing points.  Only the first few.} \label{fig:simulations_huber}
\end{figure}

\subsection{Predicting Miles-Per-Gallon}\label{ss:perf_simulations}
The next example compares the $l_2$ versus Huber loss regressions.
Note that Huber loss function is an example that cannot be solved
using the theory of \citeasnoun{Barl1972} and $l_2$ isotonic
regression.  This example uses a data set of 392 automobiles
\cite{Asun2010} and models miles-per-gallon using the following
seven variables: origin, model year, number of cylinders,
acceleration, displacement, horsepower, and weight.  Isotonic
regression with an $l_2$ loss function was already shown to be
useful for this data set in \citeasnoun{Luss2012}.  In this
experiment, we have modified one data point to be an outlier.  This
random point is chosen such that isotonicity constraints require its
fit to be less than the fits of five other data points.  The experiment simulates a real-life outlier problem which affects the training sample but should not affect prediction.  We assume squared error loss to be the true prediction criterion (therefore
the out-of-sample evaluation criterion), and fit models with Huber
loss to avoid sensitivity to outliers. Table \ref{table:auto_stats}
displays the results of one random division of the data (2/3 for
training, 1/3 for testing). A paired t-test comparing the
out-of-sample predictive performance  of the two models (IRP and
GIRP) confirms the significant edge of the model generated with
Huber's loss function in this setting.

\begin{table}[h!]
\begin{center}
\small{
\begin{tabular}{|c|c|c|c|c|c|c|}
\hline
  Number & IRP LS & GIRP Huber & IRP LS & IRP LS & GIRP Huber & GIRP Huber\\
  Variables & Min MSE &Min MSE &Min Path &Path Length & Min Path &Path Length\\
  \hline
1&37.47 $\pm$ 9.67&38.48 $\pm$ 10.42&2&3&2&3\\ \hline 
2&31.22 $\pm$ 7.04&\textbf{27.01 $\pm$ 6.28}&17&17&7&16\\ \hline 
3&21.19 $\pm$ 6.75&15.83 $\pm$ 4.76&12&30&5&30\\ \hline 
4&22.90 $\pm$ 6.78&\textbf{15.53 $\pm$ 4.01}&4&53&11&54\\ \hline 
5&19.94 $\pm$ 7.06&\textbf{10.95 $\pm$ 3.15}&4&78&29&81\\ \hline 
6&17.55 $\pm$ 5.99&\textbf{9.78 $\pm$ 2.89}&4&86&69&90\\ \hline 
7&18.91 $\pm$ 6.29&\textbf{10.24 $\pm$ 3.51}&4&95&71&95\\ \hline 
\end{tabular}}
\end{center}
\caption{\label{table:auto_stats} Statistics for auto mpg data.
Miles-per-gallon is regressed on a seven potential variables:
origin, model year, number of cylinders, acceleration, displacement,
horsepower, and weight. A comparison between the results of IRP and
GIRP with Huber loss is shown. One data point $y_i$ such that
$x_i\preceq x_j$ for $j=1\ldots 5$ was modified to be an outlier.
$\delta$ for Huber loss is set to one standard deviation of the
training responses.  Bold demonstrates statistical significance with
95\% confidence determined by a paired t-test using 131
out-of-sample observed squared losses obtained from models trained
on 261 in-sample observations.}
\end{table}

\section{Conclusion}
In this paper, we show how a relatively minor adjustment to the
previously proposed IRP algorithm leads to a generalization allowing us to efficiently fit isotonic models under any convex
differentiable loss function. Our proposed GIRP algorithm also
generates regularized isotonic solutions along its path, in addition
to the optimal isotonic solution. An important remaining challenge
is to generalize the approach to handling convex non-differentiable
loss functions (like absolute loss or the hinge loss of support
vector machines), an important topic for future
research.  Our analysis does not hold in this case due to nonuniqueness of the subproblems.

{\small
\bibliographystyle{agsm}
\bibliography{ConvexIsotonic}}

\section{Appendix}
We need the following additional terminology: A group $X$ \emph{majorizes} (\emph{minorizes}) another group $Y$ if
$X\succeq Y$ ($X\preceq Y$).  A group $X$ is a \emph{majorant}
(\emph{minorant}) of $X\cup A$ where $A=\cup_{i=1}^k{A_i}$ if
$X\not\prec A_i$ ($X\not\succ A_i$) $\forall i=1\ldots k$.

\noindent \\
\textbf{Theorem \ref{th:no_regret_cut}:}\\
\begin{proof}
We prove by contradiction.  Assume there exists a union of $K$
blocks in $V$ in the optimal solution labeled
$\mathcal{M}=M_1\cup\ldots\cup M_K$ that get broken by the cut, with
$M_1$ and $M_K$ as the minorant and majorant block in $\mathcal{M}$,
and $M_{k}^L$ and $M_{k}^U$ as the groups in $M_k$ below and above
the cut.  Define $\mathcal{L}$ as the union of all blocks in $V$
that lie ``below'' the algorithm cut, $\mathcal{U}$ as the union of
all blocks in $V$ that lie ``above'' the algorithm cut. Further
define $A_K^L\subseteq\mathcal{L}$ ($A_1^U\subseteq\mathcal{U}$) as
the union of blocks along the algorithm cut such that $A_K^L\succ
M_K^L$ ($A_1^U\prec M_1^U$).  Figure \ref{fig:proof1} depicts an
example of these definitions where $A_1^U=A_1^L=A_K^U=A_K^L=\{\}$
for simplicity.

We first prove that $w_{M_1}>w_{V}$. First, consider the case
$A_1^U=\{\}$.  By convexity of $f_i(\cdot)$ and summing over group
$M_1^U$, we have
\[
\displaystyle\sum_{i\in M_1^U}{f_i(w_{M_1^U})}\ge
\displaystyle\sum_{i\in
M_1^U}{f_i(w_V)}+(w_{M_1^U}-w_V)\displaystyle\sum_{i\in
M_1^U}{\frac{\partial
f_i(\hat{y}_i)}{\partial\hat{y}_i}\bigg|_{w_V}}.
\]
Definition of the weight operator gives
\[
\displaystyle\sum_{i\in
M_1^U}{f_i(w_{M_1^U})}\leq\displaystyle\sum_{i\in
M_1^U}{f_i(w_V)}\Rightarrow(w_{M_1^U}-w_V)\displaystyle\sum_{i\in
M_1^U}{\frac{\partial
f_i(\hat{y}_i)}{\partial\hat{y}_i}\bigg|_{w_V}}\leq 0.
\]
Finally, by the definition of the algorithm cut in
(\ref{eq:optimal_cut_lp}) since no block exists below ${M_{1}^U}$ to
affect isotonicity,
\begin{equation} \label{eq:proof_thm1}
\displaystyle\sum_{i\in M_1^U}{\frac{\partial
f_i(\hat{y}_i)}{\partial\hat{y}_i}\bigg|_{w_V}}\leq 0
\end{equation}
so that $w_{M_1^U}\ge w_V$.  Since $M_1$ is a block, we have
$w_{M_{1}^L}> w_{M_{1}^U}$, and then \[ w_{M_1^L}>w_{M_1^U}>
w_{V}\Rightarrow w_{M_1}>w_{V}. \] For the case, $A_1^U\neq\{\}$, we
have $w_{M_1}>w_{A_1^U}>w_{V}$ with the first inequality due to
optimality and the second follows directly the proof above replacing
$M_1^U$ by $A_{1}^U$.  A proof for $w_{M_K} < w_{V}$ follows a
similar argument focusing on $M_K^L$.  Putting this together gives
$w_{M_1}>w_{V}>w_{M_K}$, which contradicts that $M_1$ and $M_K$ are
blocks in the global solution, since by assumption then
$w_{M_1}<w_{M_K}$.  The case $K=1$ is also trivially covered by the
above arguments.  We conclude that the algorithm cannot cut any
block.
\end{proof}

\begin{figure}[h!] \begin{center}
  \begin{tabular} {c}
     \includegraphics[width=0.49 \textwidth]{./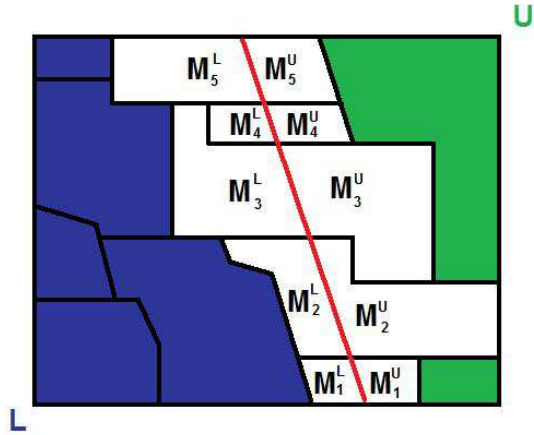}
  \end{tabular}
\caption{Illustration of proof of Theorem \ref{th:no_regret_cut}.
Black lines separate blocks.  The diagonal red line through the
center demonstrates a cut of Algorithm \ref{alg:GIRP}. $\mathcal{L}$
is the union of blue blocks below the cut and $\mathcal{U}$ is the
union of green blocks above the cut.  White blocks are blocks that
are potentially split by Algorithm \ref{alg:GIRP}.  These blocks are
split into $M_1^L,\ldots,M_5^L$ below the cut and
$M_1^U,\ldots,M_5^U$ above the cut.  In the proof, $M_i=M_i^L\cup
M_i^U$ $\forall i=1\ldots5$.  The proof shows, for example, that if
the algorithm splits $M_1$ into $M_1^L$ and $M_1^U$ according to the
defined cut in \ref{eq:optimal_cut_lp}, then there must be no
isotonicity violation when creating blocks from $M_1^L$ and $M_1^U$.
However, since $M_1$ is assumed to be a block, there must exist an
isotonicity violation between $M_1^L$ and $M_1^U$, providing a
contradiction. } \label{fig:proof1}
\end{center} \end{figure}

\noindent \\
\textbf{Theorem \ref{th:isotonic_solutions}:}\\
\begin{proof}
The proof is by induction.  The base case, i.e., first iteration,
where all points form one group is trivial.  The first cut is made
by solving linear program (\ref{eq:optimal_cut_lp}) which constrains
the solution to maintain isotonicity.

Assuming that iteration $k$ (and all previous iterations) provides
an isotonic solution, we prove that iteration $k+1$ must also
maintain isotonicity. Figure \ref{fig:proof2} helps illustrate the
situation described here.  Let $G$ be the group split at iteration
$k+1$ and denote $A$ ($B$) as the group under (over) the cut.  Let
$\mathcal{A}=\{X:X $ is a group at iteration $k+1, \exists i\in X$
 such that $(i,j)\in\mathcal{I}$ for some $j\in A\}$ (i.e.,
$X\in\mathcal{A}$ border $A$ from below).

Consider iteration $k+1$.  Denote $\mathcal{X}=\{X\in\mathcal{A}:w_A
< w_{X}\}$ (i.e., $X\in\mathcal{X}$ violates isotonicity with $A)$.
The split in $G$ causes the fit in nodes in $A$ to decrease.  Proof
that
\[
\displaystyle\sum_{i\in A}{\frac{\partial
f_i(\hat{y}_i)}{\partial\hat{y}_i} \bigg|_{w_G}}\geq 0
\]
follows the proof of (\ref{eq:proof_thm1}) in Theorem 1 above so
that $w_A\ge w_G$.  We will prove that when the fits in $A$
decrease, there can be no groups below $A$ that become violated by
the new fits to $A$, i.e.,  the decreased fits in $A$ cannot be such
that $\mathcal{X}\neq\{\}$.

We first prove that $\mathcal{X}=\{\}$ by contradiction. Assume
$\mathcal{X}\neq\{\}$. Denote $k_0<k+1$ as the iteration at which
the last of the groups in $\mathcal{X}$, denoted $D$, was split from
$G$ and suppose at iteration $k_0$, $G$ was part of a larger group
$H$ and $D$ was part of a larger group $F$.  It is important to note
that $X\bigcap(F\bigcup H)=\{\}$ $\forall X\in\mathcal{X}\setminus
D$ at iteration $k_0$ because by assumption all groups in
$\mathcal{X}\setminus D$ were separated from $A$ before iteration
$i$. Thus, at iteration $k_0$, $D$ is the only group bordering $A$
that violates isotonicity.

Let $D_U$ denote the union of $D$ and all groups in $F$ that
majorize $D$. By construction, $D_U$ is a majorant in $F$. Hence
$w_{D_U} < w_{F\cup H}$ by Algorithm \ref{alg:GIRP} and $w_A <
w_{D_U}$ by definition since $w_{D_U} > w_D > w_A$. Also by
construction, any set $X\in H$ that minorizes $A$ has $w_{X} <
w_{A}$ (each set $X$ that minorizes $A$ besides $D$ such that $w_{X}
< w_{A}$ has already been split from $A$). Hence we can denote $A_L$
as the union of $A$ and all groups in $H$ that minorize $A$ and we
have $w_A > w_{A_L}$ and $A_L$ is a minorant in $H$. Since
$A_L\subseteq H$ at iteration $i$, we have
\[w_{F\cup H} < w_{A_L} < w_A < w_{D_U} <
w_{F\cup H}\]  which is a contradiction, and hence the assumption
$\mathcal{X}\neq\{\}$ is false.  The first inequality is because the
algorithm left $A_L$ in $H$ when $F$ was split from $H$, and the
remaining inequalities are due to the above discussion.  Hence the
split at iterations $k+1$ could not have caused a break in
isotonicity.

A similar argument can be made to show that the increased fit for
nodes in $B$ does not cause any isotonic violation.  The proof is
hence completed by induction.
\end{proof}

\begin{figure}[h!] \begin{center}
  \begin{tabular} {c}
     \includegraphics[width=0.49 \textwidth]{./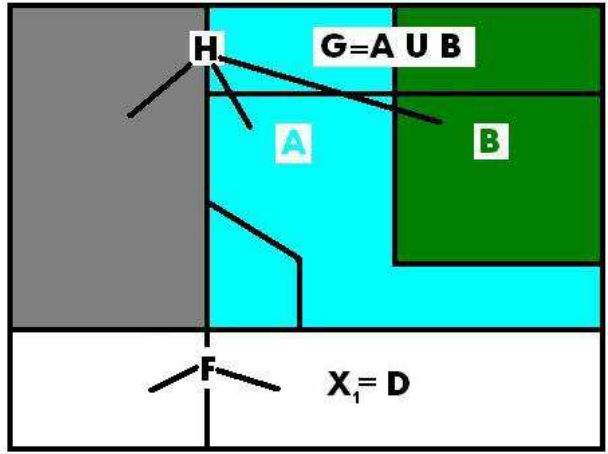}
  \end{tabular}
\caption{Illustration of proof of Theorem
\ref{th:isotonic_solutions} showing the defined sets at iteration
$k+1$. $G$ is the set divided at iteration $k+1$ into $A$ (all blue
area) and $B$ (all green area). The group bordering $A$ from below
denoted by $X_1$ (also referred to as $D$ in the proof) is in
violation with $A$. At iteration $k_0$, $G$ is part of the larger
group $H$ and $X_1$ is part of the larger group $F$.  At iteration
$k_0$, groups $F$ and $H$ are separated. The proof shows that when
$A$ and $B$ are split at iteration $k+1$, no group such as $X_1$
where $w_{X_1}>w_A$ could have existed.  In the picture, $X_1$ must
have been separated at an iteration $k_0<k+1$, but the proof,
through contradiction, shows that this cannot occur. }
\label{fig:proof2}
\end{center} \end{figure}

 \end{document}